\documentclass[11pt]{amsart}
\usepackage[english]{babel}
\usepackage{physics}
\usepackage{braket}
\usepackage{dsfont}
\usepackage{xcolor}
\usepackage{graphicx}
\usepackage{tikz}
\usepackage{amssymb}
\usepackage{amsmath}
\usepackage{mathrsfs}

\usepackage{polski}
\usepackage{verbatim}
 
\usetikzlibrary{matrix}

\usepackage[margin=1in]{geometry}
\usepackage[pagebackref, colorlinks = true, linkcolor = blue, urlcolor  = blue, citecolor = red]{hyperref}

\renewcommand{\epsilon}{\varepsilon}
\renewcommand{\phi}{\varphi}

\let\Tr\relax
\DeclareMathOperator{\Tr}{Tr}

\newcommand{\can}{\mathbf{can}}

\newtheorem{thm}{Theorem}[section]
\newtheorem*{thm*}{Theorem}
\newtheorem{lem}[thm]{Lemma}
\newtheorem{cor}[thm]{Corollary}

\newtheorem{defi}[thm]{Definition}
\newtheorem{prop}[thm]{Proposition}
\newtheorem{remark}[thm]{Remark}

\newtheorem{conjecture}[thm]{Conjecture}

\begin{document}

\title[A Fisher information-based incompatibility criterion for quantum channels]{A Fisher information-based incompatibility criterion\\for quantum channels}

\author{Qing-Hua Zhang}
\email{2190501022@cnu.edu.cn}
\address{School of Mathematical Sciences, Capital Normal University, Beijing 100048, China and Laboratoire de Physique Th\'eorique, Universit\'e de Toulouse, CNRS, UPS, France}

\author{Ion Nechita}
\email{ion.nechita@univ-tlse3.fr}
\address{Laboratoire de Physique Th\'eorique, Universit\'e de Toulouse, CNRS, UPS, France}

\begin{abstract} 
We introduce a new incompatibility criterion for quantum channels, based on the notion of (quantum) Fisher information. Our construction is based on a similar criterion for quantum measurements put forward by H.~Zhu. We then study the power of the incompatibility criterion in different scenarios. Firstly, we prove the first analytical conditions for the incompatibility of two Schur channels. Then, we study the incompatibility structure of a tuple of depolarizing channels, comparing the newly introduced criterion with the known results from asymmetric quantum cloning. 
\end{abstract}

\maketitle

\tableofcontents

\section{Introduction}
The impossibility of simultaneous realizations of two quantum operations is one of the fundamental features of quantum theory \cite{Heisenberg1927, Bohr1928}. Two of the most famous incarnations of this principle are the \emph{Heisenberg uncertainty principle} (the position and momentum of a quantum particle can not be measured simultaneously \cite{Heisenberg1985Uber}) and the \emph{no-cloning theorem} (there is no physical operation producing two identical copies of an unknown, arbitrary, quantum state \cite{dieks1982communication,wootters1982single}). In general, two (or more) quantum operations, such as measurements, channels, or instruments, are called \emph{compatible} if they can be seen as marginals of a common operation; if there is no physical operation having the original ones as marginals, they are called \emph{incompatible}. As quantum theory is built on Hilbert space, general quantum measurements are consider as the positive operator-valued measures (POVMs). In quantum information theory, there are many applications of the notion of incompatibility, such as the robustness of entanglement \cite{Vidal1999Robustness,Steiner2003Generalized}, the robustness of measurement incompatibility \cite{Uola2015One,Uola2019Quantifying,Designolle2019Incompatibility}, quantum non-locality \cite{Wolf2009Measurements,Bene2018Measurement}, quantum steering \cite{Quintino2014Joint,Uola2015One}, quantum state discrimination \cite{Carmeli2019Quantum,Skrzypczyk2019All,Mori2020Operational}, quantum resource theory \cite{Chitambar2019Quantum}, and quantum cryptography \cite{Brunner2014Brunner}. 

In modern formalism of quantum theory, the most general description of physical transformations of quantum states is in terms of \emph{quantum channels} \cite{nielsen2010Quantum,Busch1995Operational}. The concept of incompatibility of quantum channels has been proposed in terms of the input-output devices \cite{Heinosaari2016An,Heinosaari2017Incompatibility}. In \cite{Heinosaari2017Incompatibility}, the authors show that the definition of \emph{incompatibility of quantum channels} is a natural generalization of joint measurability of quantum observables. There exists a large body of work dealing with this notion from various points of view \cite{Kuramochi2018Quantum,Carmeli2019Witnessing,Mori2020Operational,Teiko2015Incompatibility}. Generally speaking, deciding whether a given family of quantum operations is compatible can be formulated as a \emph{semidefinite program} \cite{boyd2004convex}. However, the size of the program grows \emph{exponentially} with the number of operations considered. Hence, this method can be computationally prohibitive even for small system size (such as qubits) when the number of systems is moderately large. To cope with this dimensionality problem, (in-)compatibility criteria have been introduced; these are conditions which are only necessary, or sufficient, for the compatibility of the given tuple of channels. As it is the case with quantum measurements \cite{Heinosaari2016An}, there exist much more compatibility criteria \cite{girard2021jordan} than incompatibility criteria. 

\medskip

In this work, we introduce a new \emph{incompatibility criterion for quantum channels}, based on the notion of (quantum) Fisher information. Our criterion is based on a similar condition put forward by H.~Zhu \cite{Zhu2015Information,zhu2016universal} in the case of quantum measurements. 

\medskip

After introducing the necessary background on Fisher information and quantum channel compatibility (Sections \ref{sec:Fisher} and \ref{sec:quantum-channels}), we put forward the new incompatibility criterion in Section \ref{sec:criterion}. The statement of the main result of the paper can be found in Theorem \ref{thm:incompatibility-criterion}. We then apply this result to study, for the first time, the incompatibility Schur channels, an important class of quantum operations with wide-ranging applications; see Section \ref{sec:Schur}. In the final two sections of the paper, we introduce different compatibility structures for assemblages of quantum channels (Section \ref{sec:assemblages}) and we study them in the case of generalized depolarizing channels (Section \ref{sec:depolarizing}). 

\section{Classical and quantum Fisher information}\label{sec:Fisher}

 Consider a family of probability distributions $\{p(x |\theta), \, x\in R\}$ parametrized by $\theta$. A central research direction in statistics is to estimate the accuracy of the value of parameter $\theta$ by observing outcomes $x$ sampled from the distributions. Recall that the (classical) \emph{Fisher information} of the model is defined as
\begin{equation*}
 I(\theta):=\sum_x{ p(x |\theta)\left(\frac{\partial \log p(x|\theta)}{\partial \theta}\right)^2}.   
\end{equation*}
When an estimator $\hat{\theta}(x)$ of the parameter $\theta$ is unbiased, the inverse of the classical Fisher information gives a lower bound on the \emph{mean square error} (MSE) of the estimator, which is the well-known \emph{Cram\'er-Rao bound} \cite{Rao1992Information,Cramer2016Mathematical}.
The notion of the classical Fisher information plays a significant role in the geometrical approach to statistics \cite{Amari1985Differential,nielsen2020elementary} and the information-theory approach to physics \cite{Frieden1998Physics}.

In the multiple-parameter scenario, when $\theta$ is a vector, the classical Fisher information is considered as a matrix form, which is a real symmetric matrix with matrix elements \cite{fisher1925Theory,Yuen1973Multiple,Braunstein1994Statistical}:
\begin{equation*}
    I_{ij}(\theta):= \sum_x{ p(x |\theta)\frac{\partial \log p(x |\theta)}{\partial \theta_i}\frac{\partial \log p(x |\theta)}{\partial \theta_j}}.
\end{equation*}

In quantum parameter estimation scenario, we may perform the quantum positive operator valued measurement (POVM) on a quantum state which depends on a parameter, to extract the parameter information. Consider a quantum measurement $\mathbf{M}=\{M_x\geq 0, \sum_x M_x=I_d\}$ acting on the states $\rho(\theta) \in \mathcal L(H_d)$. The parameterized probability of outcomes $x$ of the measurement is $p(x|\theta)=\Tr[\rho(\theta) M_x]$. The corresponding measurement-induced Fisher information $I_{\mathbf{M}}(\theta)$ is then given by
\begin{equation*}
    I_{\mathbf{M}}(\theta)=\sum_x{ p(x |\theta)\Tr\left(\frac{\partial \log \rho(\theta)}{\partial \theta}M_x\right)^2}  . 
\end{equation*}

The \emph{quantum Fisher information} of the model $\rho(\theta)$ is defined as \cite{Holevo1982Probabilistic}
$$ J(\theta):=\Tr[\rho(\theta)L(\theta)^2],$$
where the symmetric logarithmic derivative (SLD) operators $L(\theta)$ for the parameter $\theta$ are determined implicitly by
$$ \frac{{\rm d}\rho(\theta)}{\theta}=\frac{1}{2}[\rho(\theta)L(\theta)+L(\theta)\rho(\theta)].$$

In contrast with classical Cram\'er-Rao bound, the inverse of quantum Fisher information is also a lower bound for the MSE of unbiased estimator, which is called the \emph{quantum Cram\'er-Rao bound} \cite{Holevo1982Probabilistic}.

In quantum multiple-parameter estimation scenario, both the measure-induced Fisher information and quantum Fisher information are matrices
\begin{align}
    \label{eq:def-I} 
    I_{{\mathbf{M}},ij}(\theta) &= \sum_x{ p(x |\theta)\Tr\left(\frac{\partial \log \rho(\theta)}{\partial \theta_i}M_x\Big){\rm tr}\Big(\frac{\partial \log \rho(\theta)}{\partial \theta_j}M_x\right)},\\
    \label{eq:def-J} J_{ij}(\theta)&=\frac{1}{2}\Tr\{\rho(\theta)[L_i(\theta)L_j(\theta)+L_j(\theta)L_i(\theta)]\},
\end{align}
where $L_k$ is the SLD operator corresponding with $\theta_k$.
The measurement-induced Fisher information resembles the classical correlations, while the quantum Fisher information resembles the quantum mutual information. From the Braunstein-Caves theorem \cite{Braunstein1994Statistical}, the quantum Fisher information, independent of measurement, is an upper bound of the measurement-induced Fisher information in the positive semidefinite order for matrices:
$$I_{\mathbf{M}}(\theta)\leq J(\theta).$$
In this work, we shall consider another relationship proposed by Gill and Massar \cite{Gill2000State} for any $d-$dimensional quantum system:
\begin{equation}\label{GMineq}
    \Tr[J^{-1}(\theta)I_{\mathbf M}(\theta)]\leq d-1.
\end{equation}
This inequality was the main ingredient in the incompatibility criterion invented by Zhu \cite{Zhu2015Information}, which lies at the foundation of our incompatibility criterion for quantum channels. 

\section{Compatibility of quantum channels}\label{sec:quantum-channels}

In this section we review the basic definitions about quantum channel compatibility. 

Let $H_d$ and $H_D$ be Hilbert spaces and $\mathcal{L}(H_d)$ denote the family of linear operators on $H_d$. In the Schr\"odinger picture, a \emph{quantum channel} is defined as a linear map $\Phi:\mathcal{L}(H_d)\rightarrow \mathcal{L}(H_D)$ having the following two properties: 
\begin{itemize}
    \item \emph{complete positivity}: for any dimension $k \geq 1$, the linear map $\mathrm{id}_k \otimes \Phi:\mathcal{L}(\mathbb C^k \otimes H_d)\rightarrow \mathcal{L}(\mathbb C^k \otimes H_D)$ is a positive operator;
    \item \emph{trace preservation}: for all operators $X \in \mathcal L(H_d)$, $\Tr \Phi(X) = \Tr X$.
\end{itemize}
We say that quantum channels are trace preserving, completely positive (TPCP) maps. 
In this paper, we shall also consider the Heisenberg picture of quantum mechanics, where channels are seen as acting on observables instead of states. This amounts to considering the adjoint map $\Phi^*:\mathcal{L}(H_D)\rightarrow \mathcal{L}(H_d)$, where the adjoint is taken with respect to the Hilbert-Schmidt scalar product on the corresponding matrix spaces \cite{Heinosaari2017Incompatibility}:
\begin{equation*}
    \langle A,\Phi(\rho)\rangle=\langle \Phi^*(A),\rho\rangle,
\end{equation*}
where $\rho\in \mathcal{L}(H_d)$, $ A\in \mathcal{L}(H_D)$ and $\langle X,Y\rangle:=\Tr(X^* Y)$. 

We now recall the definition of the compatibility of quantum channels and refer the reader to the review \cite{Heinosaari2016An} for futher properties. 

\begin{defi}\label{def:channel-compatibility}
Consider two quantum channels $\Phi_1:\mathcal{L}(H_d)\rightarrow \mathcal{L}(H_{d_1})$ and $\Phi_2:\mathcal{L}(H_{d})\rightarrow \mathcal{L}(H_{d_2})$ having the same input space. The pair $(\Phi_1, \Phi_2)$ is said to be \emph{compatible}, if there exists a \emph{joint channel} $\Lambda:\mathcal{L}(H_d)\rightarrow \mathcal{L}(H_{d_1}\otimes H_{d_2})$ such that $\Phi_{1,2}$ are the \emph{marginals} of $\Lambda$:
$$\forall X \in \mathcal L(H_d), \quad \Phi_1(X) = \Tr_2 \Lambda(X) \quad \text{ and } \quad \Phi_2(X) = \Tr_1 \Lambda(X),$$
where $\Tr_{1,2}$ denote the partial trace operations in $\mathcal{L}(H_{d_1}\otimes H_{d_2}) \cong \mathcal L(H_{d_1}) \otimes \mathcal L(H_{d_2})$.

In the Heisenberg (dual) picture, the condition above reads
$$\forall A \in \mathcal L(H_{d_1}),\quad \Phi_1^*(A) = \Lambda^*(A\otimes I_{d_2}),\quad \text{ and } \quad \forall B \in \mathcal L(H_{d_2}),\quad \Phi_2^*(B) = \Lambda^*(I_{d_1}\otimes B).$$
The (in-)compatibility of more than two channels is defined in a similar manner. 
\end{defi}

As an example, let us consider the partially depolarizing channel, which is defined as:
\begin{equation}
    \Phi_t=t\cdot \mathrm{id}+(1-t)\Delta,\qquad 0\leq t\leq 1,
\end{equation}
with $\mathrm{id}(A)=A$ and $\Delta(A)=\Tr A)I_d/d$ for any operator $A$; these quantum channels will be discussed at length in Section \ref{sec:depolarizing}. From the \emph{no-cloning theorem} \cite{dieks1982communication,wootters1982single}, it follows that two copies of the identity channel, $(\mathrm{id}, \mathrm{id})$ are incompatible. On the other hand, the completely depolarizing channel $\Delta$ is compatible with any other channel. A question is the self-incompatibility of $\Phi_t$. It well known the channel $\Phi_t$ is self-compatible if $0\leq t\leq \frac{d+2}{2(d+1)}$ \cite{Werner1998Optimal,girard2021jordan}.  The necessary and sufficient condition for the compatibility of two different depolarizing channels $\Phi_{s}$ and $\Phi_{t}$ was shown in \cite{hashagen2017universal, Haapasalo2019Compatibility, nechita2021geometrical}:
\begin{equation}\label{eq:asymmetric-cloning}
    t+s-\frac{2}{d}\sqrt{(1-t)(1-s)}\leq 1.
\end{equation}

As previously discussed, quantum channel incompatibility is a key phenomenon in quantum theory, being at the heart of fundamental results in quantum information, such as the \emph{no-cloning theorem}. In order to measure the degree of incompatibility of a given set of quantum channels, several definitions of \emph{robustness of incompatibility} have been considered in the literature \cite{haapasalo2015robustness,Uola2019Quantifying,girard2021jordan}. In this section, we introduce a new such measure for a tuple of channels, which has the merit to take into consideration the asymmetry between the channels considered. A similar definition was considered in the case of POVMs in \cite{bluhm2018joint,bluhm2020compatibility}. We shall consider only channels acting on $\mathcal L(H_d)$, and we recall that $\Delta$ denotes the fully depolarizing channel $\Delta(X) = (\Tr X) I_d / d$.

\begin{defi}\label{def:Gamma}
Given an $N$-tuple of quantum channels $\mathbf \Phi := (\Phi_1, \Phi_2, \ldots, \Phi_N)$, define the \emph{compatibility region} of $\mathbf \Phi$ as
$$\Gamma_{\mathbf \Phi} := \Bigg\{s \in [0,1]^N \, : \, \text{ the channels } \Big[s_i \Phi_i + (1-s_i) \Delta\Big]_{i=1}^N \text{ are compatible}\Bigg\}.$$
\end{defi}

Note that the definition is relevant event in the case where the channels $\Phi_i$ are identical: $\Phi_i = \Phi$ for all $i \in [N]$, in which case we call $\Gamma_\Phi := \Gamma_{\mathbf \Phi}$ the \emph{self-compatibility} region (note that the dependence in $N$ is still present, since we are consider $N$ copies of the channel $\Phi$). 

The following result is a simple exercise. 

\begin{prop}
For any $N$-tuple of quantum channels $\mathbf \Phi := (\Phi_1, \Phi_2, \ldots, \Phi_N)$, the set $\Gamma_{\mathbf \Phi}$ is convex and closed (i.e.~a convex body). We have $0 \in \Gamma_{\mathbf \Phi}$, and, for all $i \in [N]$, 
$$e_i := (0, \ldots, 0, \underbrace{1}_{i\text{-th position}}, 0, \ldots, 0) \in \Gamma_{\mathbf \Phi}.$$
\end{prop}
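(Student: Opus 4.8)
The plan is to verify each of the three claimed properties in turn, reducing everything to the convexity and closedness of the set of joint channels. The key observation is that compatibility of a tuple is witnessed by the existence of a single joint channel $\Lambda$, and the set of all channels $\mathcal L(H_d) \to \mathcal L(H_{d_1} \otimes \cdots \otimes H_{d_N})$ is itself a convex, compact set (it is the image under the Choi isomorphism of the intersection of the positive semidefinite cone with an affine trace-preservation constraint). The membership conditions defining $\Gamma_{\mathbf \Phi}$ are affine in the channel data, so this structure should transfer cleanly.

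For convexity, I would take $s, s' \in \Gamma_{\mathbf \Phi}$ with witnessing joint channels $\Lambda, \Lambda'$, and fix $\lambda \in [0,1]$. The natural candidate witness for $\lambda s + (1-\lambda) s'$ is \emph{not} simply $\lambda \Lambda + (1-\lambda)\Lambda'$, because the marginals of the mixture would be $\lambda[s_i \Phi_i + (1-s_i)\Delta] + (1-\lambda)[s_i' \Phi_i + (1-s_i')\Delta]$, and one must check this equals $[(\lambda s_i + (1-\lambda)s_i')\Phi_i + (1 - \lambda s_i - (1-\lambda)s_i')\Delta]$. Here I would use linearity of the partial trace together with the fact that $\Delta$ appears as the affine ``base point'': expanding both sides shows they coincide exactly, since the convex combination commutes with the affine parametrization $s_i \mapsto s_i \Phi_i + (1-s_i)\Delta$ coordinatewise. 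Thus $\lambda \Lambda + (1-\lambda)\Lambda'$ is again completely positive and trace preserving (the set of channels being convex), and its marginals are the required channels, giving convexity. Closedness follows by a standard compactness argument: the set of candidate joint channels is compact, the partial-trace (marginal) maps are continuous, and $\Gamma_{\mathbf \Phi}$ is the image of this compact set under the continuous map sending $\Lambda$ to the vector of scalars $s_i$ extracted from its marginals; more carefully, if $s^{(n)} \to s$ with witnesses $\Lambda^{(n)}$, extract a convergent subsequence $\Lambda^{(n_k)} \to \Lambda$ in the compact channel set, and pass to the limit in the marginal conditions using continuity of $\Tr_{\neq i}$.

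For the point $0 \in \Gamma_{\mathbf \Phi}$, note that at $s = 0$ every channel in the tuple is $\Delta$, and $N$ copies of the fully depolarizing channel are compatible: the product channel $\Lambda = \Delta^{\otimes N} \circ (\text{nothing})$, or more concretely $\Lambda(X) = (\Tr X)\, I_{d}^{\otimes N}/d^N$, has all marginals equal to $\Delta$, so it serves as a joint channel. For the vertices $e_i$, at $s = e_i$ the $i$-th channel is $\Phi_i$ and all others are $\Delta$; here I would exhibit the joint channel $\Lambda(X) = \Phi_i(X) \otimes \bigotimes_{j \neq i} (I_{d_j}/d_j)$, whose $i$-th marginal is $\Phi_i$ (the other factors being trace-one states that trace out to $1$) and whose $j$-th marginals for $j \neq i$ are $\Delta$. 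One checks $\Lambda$ is TPCP as a tensor product of channels.

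I do not expect any genuine obstacle here; the proposition is flagged as ``a simple exercise.'' The one point requiring a little care is the convexity argument, specifically confirming that the convex combination of witnesses interacts correctly with the \emph{coordinatewise} affine parametrization $s_i \mapsto s_i \Phi_i + (1-s_i)\Delta$ — one must expand the marginals explicitly rather than appeal to a vague ``linearity'' claim. Everything else reduces to writing down explicit product joint channels and invoking compactness of the channel set.
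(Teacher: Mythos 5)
Your proof is correct, and since the paper states this proposition without proof (``a simple exercise''), there is no authorial argument to diverge from; your route --- convexity via the convex combination $\lambda\Lambda + (1-\lambda)\Lambda'$ of witnessing joint channels together with the coordinatewise affine parametrization $s_i \mapsto s_i\Phi_i + (1-s_i)\Delta$, closedness via compactness of the channel set under the Choi isomorphism and a subsequence limit, and explicit product witnesses $\Lambda(X) = \Phi_i(X)\otimes\bigotimes_{j\neq i} I_d/d$ for the points $e_i$ and $0$ --- is exactly the standard argument the authors intend. All the delicate points are handled properly, in particular your explicit expansion showing the mixture of marginals reparametrizes as $\tilde s_i = \lambda s_i + (1-\lambda)s_i'$, and your replacement of the loose ``image of a compact set'' phrasing by the rigorous convergent-subsequence argument.
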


\section{Channel incompatibility via POVM incompatibility}\label{sec:criterion}

This is the main section of our paper, where we put forward a new incompatibility criterion for quantum channels in Theorem \ref{thm:incompatibility-criterion}. Our criterion is based on an incompatibility criterion for quantum measurements (POVMs) introduced by H.~Zhu and his collaborators \cite{Zhu2015Information, zhu2016universal}.

Let us start by recalling the definition of compatibility (or joint measurability) of quantum measurements. First, recall that a \emph{quantum measurement} (or \emph{POVM}) is a $k$-tuple of operators $\mathbf A = (A_1, A_2, \ldots, A_k)$, having the following two properties: 
\begin{itemize}
    \item \emph{positivity}: the operators $A_1, \ldots, A_k \in \mathcal L(H_d)$ are positive semidefinite;
    \item \emph{normalization}: $\sum_{i=1}^k A_i = I_d$.
\end{itemize}
A POVM gives the most general form of a physical process which produces the probabilities given by the \emph{Born rule}: when measuring a quantum system described by a density matrix $\rho$, one obtains the result $i \in [k]$ with probability
$$\mathbb P[ \text{ outcome } i] = \Tr(\rho A_i).$$
Naturally, one can see a POVM $\mathbf A$ as a quantum-to-classical channel
$$\Phi_{\mathbf A}(X) = \sum_{i=1}^k \Tr(X A_i) \ketbra{i}{i},$$
where $\{\ket i\}_{i=1}^k$ denotes the canonical basis of $\mathbb C^k$ corresponding to the pointer states of the measurement apparatus. 

Whether two (or more) quantum measurement can be performed simultaneously is one of the crucial questions lying at the foundations of quantum theory \cite{Heisenberg1927, Bohr1928}. Mathematically, we have the following important definition (compare with Definition \ref{def:channel-compatibility}). 

\begin{defi}
Two POVMs ${\mathbf{A}}=\{A_i\}_{i \in [k]}$ and ${\mathbf{B}}=\{B_j\}_{k \in [l]}$ are said to be \emph{compatible} (or \emph{jointly measurable}) if there exists a third POVM $\mathbf C = \{C_{ij}\}_{(i,j) \in [k] \times [l]}$, called \emph{joint measurement}, such that 
\begin{align*}
    \forall i \in [k], \qquad &A_i = \sum_{j=1}^l C_{ij},\\
    \forall j \in [l], \qquad &B_j = \sum_{i=1}^k C_{ij}.
\end{align*}
Otherwise, the measurements $\mathbf{A}$ and $\mathbf{B}$ are called \emph{incompatible} \cite{Ali2009Commutative}. Compatibility of more than two measurements is defined similarly. 
\end{defi}

Quantum measurement (in-)compatibility has received a lot of attention in the literature, see e.g.~the excellent reviews \cite{Heinosaari2016An,guhne2021incompatible}, or the recent perspective on the problem focusing on the post-processing partial order \cite{heinosaari2022order}. Importantly for us, in \cite{Zhu2015Information}, H.~Zhu proposed a family of universal \emph{POVM incompatibility criteria} based on the classical Fisher information matrix. Assume a measurement ${\mathbf{C}}$ is joint measurement of  ${\mathbf{A}}_i$. According to the Fisher information data-processing inequality, the measurement-induced Fisher information matrix of ${\mathbf{A}}_i$ should not exceed that of ${\mathbf{C}}$, that is to say 
$$I_{{\mathbf{A}}_i}\leq I_{\mathbf{C}}$$
for all quantum states $\theta$ ($\theta$ is omitted in the formula above for convenience); the Fisher information matrix $I$ was defined in Eq.~\eqref{eq:def-I}. Define $\widetilde{I}_{{\mathbf{A}}_i}:=J^{-1/2}I_{{\mathbf{A}}_i}J^{-1/2}$ as the metric-adjusted Fisher information. The following inequality holds for compatible measurements based on Gill-Massar inequality \eqref{GMineq}: 
\begin{equation}\label{zhu1}
    \min\Big\{\Tr H \, : \, \ H \geq \widetilde{I}_{\mathbf A_i} \quad \forall i \in [N] \Big\}\leq d-1.
\end{equation}
Otherwise, the $N$-tuple of measurements $(\mathbf{A}_i)_{i \in [N]}$ is incompatible. When the parameter $\theta$ (the state around which we compute the Fisher information) corresponds to the maximally mixed state $\theta = I_d/d$, the inequality \eqref{zhu1} can be rephrased as the following proposition \cite{Zhu2015Information,zhu2016universal}. 

\begin{prop}\label{prop:criterion-POVM-incompatible}
For a set of $N$ measurements $\mathbf A = (\mathbf A_1, \mathbf A_2, \ldots, \mathbf A_N)$ on $\mathcal L(H_d)$, define the operators 
$$\forall i \in [N] \qquad G_{\mathbf{A}_i} := \sum_{s=1}^{k_i} |A_i(s)\rangle\langle A_i(s)|/[\Tr(A_i(s)] \in \mathcal L(H_d^{\otimes 2}),$$
where $A_i(1), A_i(2), \ldots, A_i(k_i)$ are the (non-zero) effects of the POVM $\mathbf A_i$, having $k_i$ outcomes. Consider now the quantity

\begin{equation}
\begin{aligned}
    \tau(\mathbf A) := \min \quad& \Tr H   \\
    s.t.\quad&H\geq G_{\mathbf A_i} \quad \forall i \in [N].
\end{aligned}
\end{equation}
If $\tau(\mathbf A)>d$, then the $N$-tuple of POVMs $\mathbf A = (\mathbf A_1, \mathbf A_2, \ldots, \mathbf A_N)$ is incompatible. 
\end{prop}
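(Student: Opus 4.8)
The plan is to recognize that this Proposition is nothing but the specialization of Zhu's inequality \eqref{zhu1} to the base point $\theta = I_d/d$, and to show that under this specialization the semidefinite program defining $\tau(\mathbf A)$ splits off a trivial scalar summand and reproduces exactly the minimization appearing in \eqref{zhu1}. First I would fix an orthonormal basis $\{e_0, e_1, \ldots, e_{d^2-1}\}$ of the real space of Hermitian operators on $H_d$ (for the Hilbert--Schmidt inner product), chosen so that $e_0 = I_d/\sqrt d$ and $e_1, \ldots, e_{d^2-1}$ are traceless. This provides a $(d^2-1)$-dimensional local parametrization $\rho(\theta) = I_d/d + \sum_{k=1}^{d^2-1}\theta_k e_k$ of the state space around the maximally mixed state, matching the parameters implicit in \eqref{zhu1}.

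The key computation is to evaluate the Fisher-information quantities at $\theta = 0$. Since $\rho(0) = I_d/d$ is central, it commutes with every operator, so both the SLD operators and the logarithmic derivatives collapse to $L_k = \partial_{\theta_k}\log\rho|_0 = d\,e_k$; substituting into \eqref{eq:def-J} gives $J_{ij}(0) = d\,\delta_{ij}$, i.e.\ $J = d\,I_{d^2-1}$ and $J^{-1/2} = d^{-1/2} I_{d^2-1}$. For a POVM $\mathbf A_i$ with effects $A_i(s)$, writing $t_s := \Tr A_i(s)$ and $m_k^{(s)} := \Tr(e_k A_i(s))$ for the traceless coordinates, the measurement-induced Fisher information \eqref{eq:def-I} (the classical Fisher information of the outcome distribution $p(s) = \Tr(\rho A_i(s))$, which equals $t_s/d$ at $\theta=0$) evaluates to $I_{\mathbf A_i, kl}(0) = d\sum_s m_k^{(s)} m_l^{(s)}/t_s$, so $\widetilde{I}_{\mathbf A_i} = J^{-1/2} I_{\mathbf A_i} J^{-1/2}$ has entries $\sum_s m_k^{(s)} m_l^{(s)}/t_s$. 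Expanding $\ket{A_i(s)} = (t_s/\sqrt d)\ket{e_0} + \sum_{k\geq 1} m_k^{(s)}\ket{e_k}$ and using the normalization $\sum_s A_i(s) = I_d$ (so $\sum_s t_s = d$ and $\sum_s m_k^{(s)} = 0$), a direct block computation then yields the clean identity
\begin{equation*}
G_{\mathbf A_i} = \begin{pmatrix} 1 & 0 \\ 0 & \widetilde{I}_{\mathbf A_i}\end{pmatrix},
\end{equation*}
block-decomposed along $e_0 \oplus \operatorname{span}(e_1,\ldots,e_{d^2-1})$.

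Given this block form, I would decouple the program defining $\tau(\mathbf A)$. Let $\mathcal P(X) = PXP + QXQ$ be the pinching onto the two blocks, where $P$ projects onto $\ket{e_0}$ and $Q = I - P$; this map is positive, unital and trace-preserving, and it fixes every $G_{\mathbf A_i}$. Hence if $H \geq G_{\mathbf A_i}$ for all $i$, then $\mathcal P(H) \geq G_{\mathbf A_i}$ for all $i$ with $\Tr\mathcal P(H) = \Tr H$, so an optimal $H$ can be taken block-diagonal, $H = \operatorname{diag}(a, H')$. The constraint then splits into $a \geq 1$ and $H' \geq \widetilde{I}_{\mathbf A_i}$ for all $i$, and minimizing the trace forces $a = 1$, giving
\begin{equation*}
\tau(\mathbf A) = 1 + \min\Big\{\Tr H' \,:\, H' \geq \widetilde{I}_{\mathbf A_i} \ \forall i\Big\}.
\end{equation*}
Finally, if $\mathbf A$ were compatible, Zhu's inequality \eqref{zhu1} evaluated at $\theta = I_d/d$ bounds the right-hand minimum by $d-1$, forcing $\tau(\mathbf A) \leq d$; the contrapositive is exactly the claimed criterion.

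The step I expect to be most delicate is the Fisher-information computation at the maximally mixed state together with the resulting block identity for $G_{\mathbf A_i}$: one must verify that the logarithmic derivative genuinely reduces to $d\,e_k$ (this relies on the centrality of $\rho(0)$) and, above all, keep careful track of the factors $t_s = \Tr A_i(s)$ and of $d$, since it is precisely the POVM normalization $\sum_s A_i(s) = I_d$ that produces the exact top-left entry $1$ and hence the clean additive relation $\tau(\mathbf A) = 1 + (\text{LHS of } \eqref{zhu1})$. Once this identity is established, the pinching argument and the final contrapositive are routine.
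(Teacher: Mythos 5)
Your proof is correct, and it in fact supplies what the paper omits: the paper gives no proof of this proposition, merely citing Zhu \cite{Zhu2015Information,zhu2016universal} for the claim that \eqref{zhu1}, specialized at $\theta = I_d/d$, ``can be rephrased'' as this SDP criterion. Your derivation is exactly the intended unpacking and checks out in every detail --- $L_k = d\,e_k$ and $J = d\,I_{d^2-1}$ at the maximally mixed state, the POVM normalization $\sum_s t_s = d$ and tracelessness of $e_1,\dots,e_{d^2-1}$ producing the block identity $G_{\mathbf A_i} = 1 \oplus \widetilde{I}_{\mathbf A_i}$ (note your projector $P = \ketbra{e_0}{e_0}$ is precisely the maximally entangled state $\omega$ of Remark \ref{rem:G-larger-omega}), and the pinching argument yielding $\tau(\mathbf A) = 1 + \min\{\Tr H' : H' \geq \widetilde{I}_{\mathbf A_i}\ \forall i\}$ so that \eqref{zhu1} gives $\tau(\mathbf A) \leq d$ for compatible tuples. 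This coincides with Zhu's own argument, i.e.\ with the route the paper implicitly relies on, so there is nothing to flag.
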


\begin{remark}
Note that the function $\tau (\mathbf{A})$ satisfies two basic requirements for a good measure of (in-)compatibility: monotonicity under coarse-graining and global unitary invariance.
\end{remark}

\begin{remark}\label{rem:G-larger-omega}
For any POVM $\mathbf A$, the associated matrix $G_{\mathbf A}$ is larger, in the positive semidefinite order, than the maximally entangled state
$$\omega := \frac 1 d \sum_{i,j=1}^d \ketbra{ii}{jj}.$$
This fact is a consequence of the important observation that the $\mathbf A \mapsto G_{\mathbf A}$ is an order morphism for the post-processing order of quantum measurements \cite{heinosaari2022order}, and $G_{\{I\}} = \omega$.
\end{remark}

A natural question is how to capture incompatibility of quantum channels using measurements. Let $\{|e_j\rangle\}$ and $\{|f_k\rangle\}$ be any sets of basis of Hilbert spaces $H_{d_1}$ and ${H_{d_2}}$, respectively. Motivated by the definition of incompatibility of quantum channels, we dedicate to research properties of the induced sets  $\{\Phi_1^*(|e_j\rangle\langle e_j|)\}$ and $\{\Phi_2^*(|f_k\rangle\langle f_k|)\}$. As we consider the quantum channel is trace-preserving, thus $\{\Phi_1^*(|e_j\rangle\langle e_j|)\}$ and $\{\Phi_2^*(|f_k\rangle\langle f_k|)\}$ can be regarded as POVMs \cite{Carmeli2019Witnessing}, that is to say,
\begin{equation}
    \sum_j \Phi_1^*(|e_j\rangle\langle e_j|)=  \sum_k \Phi_2^*(|f_k\rangle\langle f_k|)=I_d.
\end{equation}

\begin{lem}\label{lem:channel-POVM-compatible}
If $N$ quantum channels $\Phi_1,\Phi_2, \ldots, \Phi_N$ are compatible, then, for all orthonormal bases $\mathbf e^{(1)}, \mathbf e^{(2)}, \ldots, \mathbf e^{(N)}$ of $\mathbb C^d$, the corresponding POVMs 
$$ \mathbf A_s := \Big[ \Phi_s^*(|\mathbf e^{(s)}_i \rangle  \langle \mathbf e^{(s)}_i|) \Big]_{i=1}^d, \qquad \forall s \in [N]$$ 
are compatible. 
\end{lem}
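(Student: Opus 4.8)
The plan is to give a direct, constructive proof: from a joint channel for the $\Phi_s$ I would build an explicit joint measurement for the induced POVMs $\mathbf A_s$. By hypothesis the channels $\Phi_1, \ldots, \Phi_N$ are compatible, so there is a joint channel $\Lambda : \mathcal L(H_d) \to \mathcal L(H_{d_1} \otimes \cdots \otimes H_{d_N})$ whose $s$-th marginal is $\Phi_s$. Passing to the Heisenberg picture as in Definition \ref{def:channel-compatibility}, this reads $\Phi_s^*(A) = \Lambda^*(I \otimes \cdots \otimes A \otimes \cdots \otimes I)$, with $A$ placed in the $s$-th tensor slot. Writing $A_s(i_s) := \Phi_s^*(|\mathbf e^{(s)}_{i_s}\rangle\langle \mathbf e^{(s)}_{i_s}|)$ for the effects of $\mathbf A_s$, the candidate joint POVM I would propose is indexed by $N$-tuples $(i_1, \ldots, i_N) \in [d]^N$ and given by
$$C_{i_1, \ldots, i_N} := \Lambda^*\!\Big( |\mathbf e^{(1)}_{i_1}\rangle\langle \mathbf e^{(1)}_{i_1}| \otimes \cdots \otimes |\mathbf e^{(N)}_{i_N}\rangle\langle \mathbf e^{(N)}_{i_N}| \Big).$$

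First I would check that $\mathbf C = \{C_{i_1, \ldots, i_N}\}$ is a genuine POVM. Positivity of each $C_{i_1, \ldots, i_N}$ follows because $\Lambda$ is completely positive, hence $\Lambda^*$ is positive, and its argument is a positive (rank-one) tensor product of projectors. Normalization follows from the fact that, since $\Lambda$ is trace-preserving, its adjoint $\Lambda^*$ is unital: summing over all indices and using $\sum_{i} |\mathbf e^{(s)}_{i}\rangle\langle \mathbf e^{(s)}_{i}| = I_{d_s}$ for each orthonormal basis yields $\sum_{i_1, \ldots, i_N} C_{i_1, \ldots, i_N} = \Lambda^*(I) = I_d$.

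The final step is to verify that $\mathbf C$ is a joint measurement for the family $(\mathbf A_s)_{s \in [N]}$. Fixing $s$ and summing over all indices $i_t$ with $t \neq s$, the projectors in every slot but the $s$-th collapse to identities, so that
$$\sum_{i_t : \, t \neq s} C_{i_1, \ldots, i_N} = \Lambda^*\!\big( I \otimes \cdots \otimes |\mathbf e^{(s)}_{i_s}\rangle\langle \mathbf e^{(s)}_{i_s}| \otimes \cdots \otimes I \big) = \Phi_s^*\big(|\mathbf e^{(s)}_{i_s}\rangle\langle \mathbf e^{(s)}_{i_s}|\big) = A_s(i_s),$$
where the middle equality is precisely the Heisenberg-picture compatibility relation. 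This exhibits $\mathbf C$ as the desired joint measurement, proving that the $\mathbf A_s$ are compatible.

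I do not expect a serious obstacle here: the argument reduces to identifying the correct candidate $\mathbf C$ and invoking the two standard facts that the adjoint of a TPCP map is positive and unital. The only point requiring a little care is the bookkeeping in the Heisenberg picture — ensuring that the partial sums over the ``other'' bases collapse to the identity in exactly the tensor slots dictated by the compatibility relation, so that the right marginal $A_s(i_s)$ is recovered.
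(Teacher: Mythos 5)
Your proof is correct and follows essentially the same route as the paper: both construct the candidate joint POVM $\Lambda^*\big(\bigotimes_{s=1}^N |\mathbf e^{(s)}_{i_s}\rangle\langle \mathbf e^{(s)}_{i_s}|\big)$ from a joint channel $\Lambda$, verify it is a POVM using that $\Lambda^*$ is completely positive and unital, and recover the $\mathbf A_s$ as marginals via the Heisenberg-picture compatibility relation. Your write-up is in fact slightly more explicit than the paper's (spelling out positivity and normalization separately), but there is no substantive difference.
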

\begin{proof}
Let $\Lambda$ be a joint channel for the compatible $N$-tuple $(\Phi_1, \Phi_2, \ldots, \Phi_N)$. Clearly, $\Lambda : \mathcal L(H_d) \to \mathcal L(H_d^{\otimes N})$, thus its adjoint is a unital, completely positive map
$$\Lambda^* : \mathcal L(H_d^{\otimes N}) \to \mathcal L(H_d).$$
Define operators $${\mathbf{B}}:= \Bigg[ \Lambda^* \Big( \bigotimes_{s=1}^N |\mathbf e^{(s)}_{i_s} \rangle \langle \mathbf e^{(s)}_{i_s} | 
\Big)\Bigg]_{i_1, \ldots, i_N \in [d]}.$$
From the fact that $\Lambda^*$ is a completely positive, unital map, we infer that $\mathbf B$ is a POVM (with $d^N$ outcomes). Let us now compute the marginals of this POVM. For some fixed $s \in [N]$ and $i_s \in [d]$, we have

\begin{equation}
\sum_{i_1, \ldots, i_{s-1}, i_{s+1}, \ldots, i_N \in [d]} B_{i_1 \cdots i_N} = \Lambda^*(I_d  \otimes \cdots \otimes |\mathbf e^{(s)}_{i_s} \rangle \langle \mathbf e^{(s)}_{i_s} | \otimes \cdots \otimes I_d) = \Phi_s^*(|\mathbf e^{(s)}_{i_s} \rangle \langle \mathbf e^{(s)}_{i_s} |) = \mathbf A_s(i_s),
\end{equation}
showing that the $s$-th marginal of $\mathbf B$ is $\mathbf A_s$. Thus ${\mathbf{B}}$ is a joint measurement of $\mathbf A_1, \mathbf A_2, \ldots, \mathbf A_N$, proving the claim.
\end{proof}

We leave open the reciprocal question, which we formulate as a conjecture (below for two channels, although the general version, for a $N$-tuple, can be easily stated). 
\begin{conjecture}
Consider two quantum channels $\Phi, \Psi: \mathcal L(H_d) \to \mathcal L(H_d)$ such that, for all orthonormal bases $\mathbf e = (e_1, \ldots, e_d)$, $\mathbf f = (f_1, \ldots, f_d)$ of $\mathbb C^d$, the POVMs
$$\Big[ \Phi^*(\ketbra{e_i}{e_i}) \Big]_{i=1}^d \quad \text{ and } \quad \Big[ \Psi^*(\ketbra{f_j}{f_j}) \Big]_{j=1}^d$$
are compatible. Then, $\Phi$ and $\Psi$ are compatible channels.
\end{conjecture}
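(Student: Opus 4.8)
The statement is precisely the converse of Lemma~\ref{lem:channel-POVM-compatible}, so the natural plan is to reconstruct a joint channel from the joint measurements that the hypothesis guarantees for \emph{every} pair of bases. The first step is to recast both notions of compatibility, in the Heisenberg picture, as extension problems for unital maps. For a fixed basis $\mathbf e$, write $\mathcal D_{\mathbf e}$ for the commutative subalgebra of operators diagonal in $\mathbf e$, spanned by the projectors $\ketbra{e_i}{e_i}$. Compatibility of the two induced POVMs for the pair $(\mathbf e, \mathbf f)$ is then equivalent to the existence of a positive unital map $\Theta_{\mathbf e, \mathbf f} : \mathcal D_{\mathbf e} \otimes \mathcal D_{\mathbf f} \to \mathcal L(H_d)$ sending $\ketbra{e_i}{e_i} \otimes \ketbra{f_j}{f_j}$ to a joint effect $C_{ij}$, and restricting to $\Phi^*$ on the first factor and to $\Psi^*$ on the second; since the domain is commutative, positivity here coincides with complete positivity. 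In the same language, compatibility of the channels $\Phi, \Psi$ is the existence of a \emph{single} unital completely positive map $\Lambda^* : \mathcal L(H_d) \otimes \mathcal L(H_d) \to \mathcal L(H_d)$ with the same marginal restrictions. The problem thus becomes: from the family of \emph{corner} maps $\{\Theta_{\mathbf e, \mathbf f}\}$, one per basis pair, glue together one global completely positive map.

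The route I would attempt exploits that the hypothesis quantifies over \emph{all} bases. Since rank-one projectors are tomographically complete, the products $\ketbra{\psi}{\psi} \otimes \ketbra{\phi}{\phi}$ span $\mathcal L(H_d) \otimes \mathcal L(H_d)$, so any \emph{consistent} assignment $\ketbra{\psi}{\psi} \otimes \ketbra{\phi}{\phi} \mapsto G(\psi, \phi) \geq 0$ extends by linearity to a unique linear map $\Lambda^*$, which is automatically unital and respects the marginals whenever the $G(\psi,\phi)$ arise from joint measurements. Two conditions must then be engineered: first, \emph{consistency}, meaning that the effect assigned to a pair $(\psi,\phi)$ does not depend on which basis pair one embeds it in, so that $\Lambda^*$ is well defined and linear; and second, \emph{complete positivity} of the resulting $\Lambda^*$, as opposed to the merely corner-wise positivity that the hypothesis supplies.

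For consistency I would proceed by compactness: the set of joint measurements for each basis pair is a nonempty compact convex set, coarse-grainings of compatible POVMs remain compatible, and so one can hope to select, along an exhausting net of finite families of bases, joint measurements that agree on overlaps and pass to a limit, an inverse-limit / Kolmogorov-type construction. The genuinely hard step, and the one I expect to be the main obstacle, is complete positivity. Positivity of a map on every commutative corner $\mathcal D_{\mathbf e} \otimes \mathcal D_{\mathbf f}$ does \emph{not} imply positivity, let alone complete positivity, on the full matrix algebra; this is exactly the classical gap between positive and completely positive maps. To close it one would have to feed back the complete positivity of the marginals $\Phi^*, \Psi^*$ and somehow promote corner-positivity to corner-complete-positivity, for instance by testing the hypothesis on entangled inputs of an enlarged system $\mathbb C^k \otimes H_d$ --- but this goes beyond what the statement literally assumes, which concerns only bases of $\mathbb C^d$.

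Because of this gap, I would in parallel test whether the conjecture can fail: the all-bases POVM criterion detects channel incompatibility only through the joint statistics of projective output measurements, and such \emph{classical shadow} criteria are known to be incomplete in related settings. A disproof would consist of a genuinely incompatible pair $\Phi, \Psi$ for which every induced pair of projective POVMs is nonetheless jointly measurable; the structured qubit families studied in Sections~\ref{sec:Schur} and~\ref{sec:depolarizing}, where both channel compatibility (via~\eqref{eq:asymmetric-cloning}) and POVM compatibility are explicitly computable, are the natural place to search. Thus the plan is two-pronged: attempt the gluing-and-lifting argument above, with the positivity-to-complete-positivity step as its crux, while simultaneously probing these explicit families for a counterexample.
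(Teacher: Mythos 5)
The first thing to note is that the paper itself does \emph{not} prove this statement: it is explicitly formulated as a conjecture and left open (``We leave open the reciprocal question''), so there is no proof of record to compare against, and any purported blind proof should be judged on its own. Your proposal, to its credit, does not actually claim a proof --- it is a research plan with an honestly flagged obstruction --- and in that sense its conclusion is consistent with the paper's status of the statement. But since the task is to assess it as a proof attempt, it has two genuine gaps. First, the linear-extension step is stated too quickly: the products $\ketbra{\psi}{\psi} \otimes \ketbra{\phi}{\phi}$ indeed span $\mathcal L(H_d) \otimes \mathcal L(H_d)$, but they satisfy a large family of linear dependencies, so a pointwise assignment $(\psi, \phi) \mapsto G(\psi,\phi)$ extends to a linear map only if it annihilates \emph{every} such dependency. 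This is a Gleason-type coherence condition, and nothing in the hypothesis supplies it: the joint measurement $\mathbf C$ for each basis pair $(\mathbf e, \mathbf f)$ is highly non-unique, and a compactness or inverse-limit selection over a net of finite basis families has no mechanism to force the chosen joint effects to respect linear relations that mix different bases. Agreement ``on overlaps'' of finite families is far weaker than linearity of the global assignment, so the well-definedness of $\Lambda^*$ is already unresolved before complete positivity even enters.

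Second, concerning the positivity issue: your remark that positivity coincides with complete positivity on the commutative corners $\mathcal D_{\mathbf e} \otimes \mathcal D_{\mathbf f}$ is correct (positive maps from commutative C*-algebras are automatically completely positive), so the corner maps themselves are unproblematic; the true gap, as you correctly identify, is that corner-wise positivity over all basis pairs at best yields positivity of the glued map on separable-type inputs, which is far from complete positivity on the full algebra $\mathcal L(H_d) \otimes \mathcal L(H_d)$ --- and the hypothesis, which quantifies only over bases of $\mathbb C^d$, gives no handle on entangled inputs of an ancilla-extended system. Your counterexample-search prong (probing the Schur and depolarizing families, where channel compatibility is computable via Eq.~\eqref{eq:asymmetric-cloning} and the induced POVMs are explicit) is the right complementary strategy precisely because the criterion sees channels only through projective output statistics. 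In summary: your diagnosis of where a proof must succeed or fail is accurate and aligned with why the authors left the statement as a conjecture, but neither prong is carried out, so the proposal establishes nothing beyond Lemma \ref{lem:channel-POVM-compatible}, which is the (already proven) forward direction.
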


We now turn to the main theoretical result of our paper: a criterion for quantum channel incompatibility. Informally, one can formulate it as follows: given an $N$-tuple of quantum channels, if one can find an $N$-tuple of orthonormal bases such that the corresponding quantum measurements are incompatible, then the original $N$-tuple of channels must also be incompatible. Our criterion is important since the are very few useful incompatibility criteria for channel incompatibility. On the other hand, there exist quite numerous incompatibility criteria for quantum measurements, so one can turn those into criteria for channels using Lemma \ref{lem:channel-POVM-compatible}. We introduce the following important notation: to a quantum channel $\Phi : \mathcal L(H_d) \to \mathcal L(H_d)$ and an orthonormal basis $\mathbf e = (e_i)_{i=1}^d$ of $\mathbb C^d$, we associate the ``$G$'' matrix
\begin{equation}\label{eq:def-G}
G_{\Phi, \mathbf e} := \sum_{i=1}^d \frac{\ketbra{\Phi^*(|e_i\rangle \langle e_i|)}{\Phi^*(|e_i\rangle \langle e_i|)}}{\Tr \Phi^*(|e_i\rangle \langle e_i|)},    
\end{equation}
which corresponds to the ``$G$'' matrix associated to the POVM
$$\Big[ \Phi^*(|e_i\rangle \langle e_i|) \Big]_{i=1}^d.$$

\begin{thm}\label{thm:incompatibility-criterion}
Let $\Phi_1,\Phi_2, \ldots, \Phi_N : \mathcal L(H_d) \to \mathcal L(H_d)$ be $N$ quantum channels. If there exists orthonormal bases $\mathbf e^{(1)}, \mathbf e^{(2)},  \ldots, \mathbf e^{(N)}$ of $\mathbb C^d$ such that the value of the semidefinite program
\begin{equation}\label{eq:SDP-incompatibility-criterion}
\begin{aligned}
    \min \quad& \Tr H   \\
    s.t.\quad&H\geq G_{\Phi_i, \mathbf e^{(i)}} \quad \forall i \in [N]
\end{aligned}
\end{equation}
is strictly larger than $d$, then the $n$-tuple of channels $\mathbf \Phi = (\Phi_1, \Phi_2, \ldots, \Phi_N)$ is incompatible. 
\end{thm}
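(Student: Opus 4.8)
The plan is to reduce Theorem~\ref{thm:incompatibility-criterion} to the POVM incompatibility criterion of Zhu (Proposition~\ref{prop:criterion-POVM-incompatible}) via the compatibility-preserving map of Lemma~\ref{lem:channel-POVM-compatible}. The key observation is that the semidefinite program \eqref{eq:SDP-incompatibility-criterion} appearing in the theorem is \emph{literally} the same program defining the quantity $\tau(\mathbf A)$ in Proposition~\ref{prop:criterion-POVM-incompatible}, once we identify the correct POVMs. Indeed, the matrix $G_{\Phi_i, \mathbf e^{(i)}}$ defined in \eqref{eq:def-G} is, by construction, exactly the ``$G$'' matrix $G_{\mathbf A_i}$ associated to the POVM
$$\mathbf A_i := \Big[ \Phi_i^*(|\mathbf e^{(i)}_j\rangle \langle \mathbf e^{(i)}_j|) \Big]_{j=1}^d,$$
since the effects of $\mathbf A_i$ are precisely $A_i(j) = \Phi_i^*(|\mathbf e^{(i)}_j\rangle \langle \mathbf e^{(i)}_j|)$, and the formula in Proposition~\ref{prop:criterion-POVM-incompatible} sums $|A_i(j)\rangle\langle A_i(j)|/\Tr(A_i(j))$ over the nonzero effects.

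First I would fix the orthonormal bases $\mathbf e^{(1)}, \ldots, \mathbf e^{(N)}$ provided by the hypothesis and define the associated POVMs $\mathbf A_1, \ldots, \mathbf A_N$ as above, recalling that each $\mathbf A_s$ is a genuine POVM because $\Phi_s$ is trace preserving, so that $\sum_j \Phi_s^*(|\mathbf e^{(s)}_j\rangle\langle \mathbf e^{(s)}_j|) = \Phi_s^*(I_d) = I_d$ by unitality of the adjoint. Next I would verify the identification $G_{\Phi_i, \mathbf e^{(i)}} = G_{\mathbf A_i}$, which is immediate from comparing the two defining formulas (one should note only that effects $A_i(j)$ which happen to vanish contribute nothing and cause no issue, as the sum in \eqref{eq:def-G} ranges over a fixed index set but the corresponding terms are zero whenever the effect is zero, matching the ``nonzero effects'' convention of the proposition). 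Consequently the two optimization problems coincide term by term, giving
$$\tau(\mathbf A) = \min\Big\{ \Tr H \, : \, H \geq G_{\Phi_i, \mathbf e^{(i)}} \ \forall i \in [N] \Big\}.$$

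The argument then closes by contraposition. Suppose the value of the program \eqref{eq:SDP-incompatibility-criterion} is strictly larger than $d$; then $\tau(\mathbf A) > d$, so Proposition~\ref{prop:criterion-POVM-incompatible} tells us that the $N$-tuple of POVMs $(\mathbf A_1, \ldots, \mathbf A_N)$ is incompatible. On the other hand, Lemma~\ref{lem:channel-POVM-compatible} asserts that if the channels $\Phi_1, \ldots, \Phi_N$ were compatible, then the POVMs $\mathbf A_s = [\Phi_s^*(|\mathbf e^{(s)}_j\rangle\langle \mathbf e^{(s)}_j|)]_{j}$ would have to be compatible for \emph{every} choice of orthonormal bases, in particular for the bases we have fixed. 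Since these POVMs are incompatible, the channels cannot be compatible, which is the desired conclusion.

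This proof is essentially a matching of definitions wired together by the two preceding results, so there is no genuine analytic obstacle; the only point requiring care is the bookkeeping that makes the two SDPs provably identical. The one place to be slightly cautious is the handling of zero effects and the constant in the threshold: Proposition~\ref{prop:criterion-POVM-incompatible} is stated with strict inequality $\tau(\mathbf A) > d$ (rather than the $d-1$ appearing in the raw Gill--Massar form \eqref{GMineq}), and one must simply transport that same strict threshold $>d$ into the channel statement without alteration. Once this alignment is checked, the theorem follows directly.
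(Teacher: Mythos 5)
Your proposal is correct and follows exactly the route the paper takes: its proof of Theorem~\ref{thm:incompatibility-criterion} is the one-line combination of Proposition~\ref{prop:criterion-POVM-incompatible} and Lemma~\ref{lem:channel-POVM-compatible}, which you have simply spelled out (identifying $G_{\Phi_i,\mathbf e^{(i)}}$ with $G_{\mathbf A_i}$ and arguing by contraposition). Your added care about zero effects and the strict threshold $>d$ is a faithful elaboration of details the paper leaves implicit, not a deviation.
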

\begin{proof}
The theorem follows directly from Proposition \ref{prop:criterion-POVM-incompatible} and Lemma \ref{lem:channel-POVM-compatible}. 
\end{proof}

\begin{remark}
If the quantum channel $\Phi$ is \emph{unital}, that is to say, $\Phi(I_d)=I_d$, the formula \eqref{eq:def-G} simplifies, in the sense that the denominator is trivial:  
\begin{equation}
    \Tr \Phi^*(\ketbra{e_i}{e_i})= \langle I_d, \Phi^*(\ketbra{e_i}{e_i}) \rangle = \langle \Phi(I_d), \ketbra{e_i}{e_i} \rangle = \langle I_d, \ketbra{e_i}{e_i} \rangle = \Tr \ketbra{e_i}{e_i} = 1.
\end{equation}
This will be the case for most of the examples we shall discuss in what follows. 
\end{remark}

It is important at this point to note that the incompatibility criterion we put forward in the result above is formulated as an SDP (semidefinite program). The usual way of formulating the compatibility of a tuple of quantum channels is also an SDP: one looks for a joint channel, a problem which can be formulated as an SDP thanks to the Choi formalism. However, let us compare the size of the SDPs: 
\begin{itemize}
    \item channel compatibility: the joint channel has a Choi matrix of size $d^{N+1}$
    \item incompatibility criterion from Theorem \ref{thm:incompatibility-criterion}: the variable $H$ has size $d^2$.
\end{itemize}
Note also that one has, in both cases, $N$ constraints of size $d^2$. So, we obtain a dramatic reduction in the size of the SDP, at the price of having only a necessary compatibility condition (i.e.~an incompatibility criterion). 

There is however a situation when the SDP \eqref{eq:SDP-incompatibility-criterion} simplifies, and can be analytically solved. This is when the matrices $G$ corresponding to the channel are orthogonal (up to the maximally entangled state $\omega$). We formalize this observation below. 

\begin{prop}\label{prop:orthogonal-incompatibility-criterion}
Consider $N$ quantum channels $\Phi_1,\Phi_2, \ldots, \Phi_N : \mathcal L(H_d) \to \mathcal L(H_d)$ and orthonormal bases $\mathbf e^{(1)}, \mathbf e^{(2)},  \ldots, \mathbf e^{(N)}$ such that, for all $i,j \in [N]$, $i \neq j$, 
$$G_{\Phi_i, \mathbf e^{(i)}} - \omega \perp G_{\Phi_j, \mathbf e^{(j)}} - \omega.$$
Then, the value of the SDP \eqref{eq:SDP-incompatibility-criterion} is  
$$1-N + \sum_{i=1}^N \Tr G_{\Phi_i, \mathbf e^{(i)}}.$$
\end{prop}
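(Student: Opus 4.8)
The plan is to solve the SDP
\begin{equation*}
\min \quad \Tr H \quad \text{s.t.} \quad H \geq G_{\Phi_i, \mathbf e^{(i)}} \quad \forall i \in [N]
\end{equation*}
explicitly under the orthogonality hypothesis. The key structural observation I would exploit is that every $G_{\Phi_i, \mathbf e^{(i)}} \succeq \omega$ (Remark \ref{rem:G-larger-omega}), so it is natural to shift coordinates and write $H = \omega + K$ and $\tilde G_i := G_{\Phi_i, \mathbf e^{(i)}} - \omega \succeq 0$. The constraints $H \succeq G_i$ become $K \succeq \tilde G_i$ for all $i$, and minimizing $\Tr H = \Tr \omega + \Tr K = 1 + \Tr K$ reduces to minimizing $\Tr K$ subject to $K \succeq \tilde G_i$ for all $i$. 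The hypothesis of the proposition says precisely that the $\tilde G_i$ are mutually orthogonal in Hilbert–Schmidt inner product.

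Next I would argue that the optimal $K$ is $K^\star = \sum_{i=1}^N \tilde G_i$. Feasibility is easy: since each $\tilde G_j \succeq 0$, we have $K^\star = \tilde G_i + \sum_{j \neq i} \tilde G_j \succeq \tilde G_i$, so all constraints hold. For optimality I would produce a matching dual certificate, or argue directly via a lower bound. The direct route: the orthogonality of the positive semidefinite operators $\tilde G_i$ forces their ranges to be mutually orthogonal subspaces (two positive operators with zero Hilbert–Schmidt inner product must have orthogonal supports, since $\Tr(\tilde G_i \tilde G_j) = 0$ with both factors $\succeq 0$ implies $\tilde G_i \tilde G_j = 0$). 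Hence for any feasible $K$, restricting to the support $P_i$ of $\tilde G_i$ gives $P_i K P_i \succeq \tilde G_i$, and because the supports are orthogonal these contributions to the trace add up, yielding $\Tr K \geq \sum_i \Tr(P_i K P_i) \geq \sum_i \Tr \tilde G_i$. Since $K^\star$ attains this, it is optimal.

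Assembling the pieces, the optimal value is
\begin{equation*}
\Tr H^\star = 1 + \Tr K^\star = 1 + \sum_{i=1}^N \Tr \tilde G_i = 1 + \sum_{i=1}^N \big( \Tr G_{\Phi_i, \mathbf e^{(i)}} - \Tr \omega \big) = 1 - N + \sum_{i=1}^N \Tr G_{\Phi_i, \mathbf e^{(i)}},
\end{equation*}
using $\Tr \omega = 1$. This matches the claimed value.

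The main obstacle I anticipate is the optimality (lower bound) half, not the feasibility half. The subtle point is upgrading "$\Tr(\tilde G_i \tilde G_j) = 0$ for positive semidefinite operators" to "$\tilde G_i \tilde G_j = 0$" and hence to genuinely orthogonal supporting projections $P_i$; this is where one uses positivity essentially, and I would want to state it as a clean lemma. A cleaner and more robust alternative would be to exhibit the SDP dual explicitly: the dual assigns operators $Z_i \succeq 0$ with $\sum_i Z_i = I$ (the gradient of $\Tr H$) maximizing $\sum_i \Tr(Z_i \tilde G_i)$ after the shift, and one checks that the choice $Z_i = P_i$ (together with any completion to the identity on the orthogonal complement of $\bigoplus_i \operatorname{ran} P_i$) is dual-feasible and achieves $\sum_i \Tr \tilde G_i$, closing the duality gap. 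I would present whichever of these two arguments is shorter, but I expect the support-orthogonality route to be the most transparent.
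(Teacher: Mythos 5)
Your proof is correct, and it shares the paper's overall strategy---shift by $\omega$ using Remark \ref{rem:G-larger-omega}, write $\tilde G_i := G_{\Phi_i,\mathbf e^{(i)}} - \omega \geq 0$, and identify the optimizer $K^\star = \sum_i \tilde G_i$---but your justification of optimality is genuinely different from, and in fact more careful than, the paper's. The paper's proof asserts that \emph{any} feasible $\tilde H$ must satisfy the operator inequality $\tilde H \geq \sum_{i=1}^N \tilde G_i$; read literally, this is false. For instance, with $\tilde G_1 = \mathrm{diag}(1,0)$ and $\tilde G_2 = \mathrm{diag}(0,1)$ (orthogonal PSD matrices), the matrix $K = \begin{pmatrix} 2 & 6/5 \\ 6/5 & 2 \end{pmatrix}$ satisfies $K \geq \tilde G_1$ and $K \geq \tilde G_2$, yet $K - (\tilde G_1 + \tilde G_2)$ has negative determinant, so $K \not\geq \tilde G_1 + \tilde G_2$. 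What is true---and all that the value of the SDP requires---is the trace inequality $\Tr K \geq \sum_i \Tr \tilde G_i$, which is exactly what your support-orthogonality argument delivers: $\Tr(\tilde G_i \tilde G_j) = 0$ for PSD operators indeed forces $\tilde G_i \tilde G_j = 0$ (since $\Tr\bigl(\tilde G_i^{1/2} \tilde G_j \tilde G_i^{1/2}\bigr) = 0$ with a PSD argument), the support projections $P_i$ are then mutually orthogonal, and $\Tr K \geq \sum_i \Tr(P_i K P_i) \geq \sum_i \Tr \tilde G_i$. One half-line you should make explicit: the first of these inequalities uses $K \geq 0$ (immediate from $K \geq \tilde G_1 \geq 0$), since $\sum_i P_i$ may be a strict sub-projection of the identity. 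Your dual-certificate alternative also works, with the small caveat that the completion $I - \sum_j P_j$ must be absorbed into one of the $Z_i$ to satisfy $\sum_i Z_i = I$; this does not change the dual value because $\bigl(I - \sum_j P_j\bigr)\tilde G_i = 0$. In short, your argument is a correct repair of the lower-bound step that the paper's one-line justification glosses over: the paper's version buys brevity, yours buys an actually valid proof of the same optimal value.
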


\begin{proof}
Taking into consideration Remark \ref{rem:G-larger-omega}, one can rewrite the SDP 
(\ref{eq:SDP-incompatibility-criterion}) by subtracting $\omega$ everywhere: 
\begin{equation}
\begin{aligned}
    1+\min \quad& \Tr \tilde H   \\
    s.t.\quad& \tilde H\geq G_{\Phi_i, \mathbf e^{(i)}} - \omega \quad \forall i \in [N]
\end{aligned}
\end{equation}
where $\tilde H = H - \omega$. Using the hypothesis, and noting that the matrices $G_{\Phi_i, \mathbf e^{(i)}} - \omega$ are all positive semidefinite, any feasible $\tilde H$ must satisfy
$$\tilde H \geq \sum_{i=1}^N G_{\Phi_i, \mathbf e^{(i)}} - \omega.$$
Hence, the optimal $\tilde H$ achieves equality above, and the conclusion follows. 
\end{proof}

This idea will be used in Sections \ref{sec:Schur} and \ref{sec:depolarizing} to obtain (analytical) incompatibility criteria for important classes of quantum channels. 

As an example, let us work out the ``$G$'' matrix for the identity channel $\mathrm{id}(X) = X$. 
\begin{equation}\label{eq:def-Z}
    G_{\mathrm{id}, \mathbf e} = \sum_{i=1}^d \Big | \ketbra{e_i}{e_i} \Big \rangle \Big \langle \ketbra{e_i}{e_i} \Big | = \sum_{i=1}^d \ketbra{e_i \otimes \bar e_i}{e_i \otimes \bar e_i} =: Z_{\mathbf e}.
\end{equation}
The matrix $Z_{\mathbf e}$ will play an important role in what follows. We gather some useful facts about it below. Recall that two orthonormal bases $\mathbf e, \mathbf f$ of $\mathbb C^d$ are called \emph{unbiased} if
$$\forall i,j \in [d], \qquad |\langle e_i, f_j \rangle| = \frac{1}{\sqrt d}.$$

\begin{lem}\label{lem:Z-and-omega}
For any orthonormal basis $\mathbf e$, we have 
$$\langle Z_{\mathbf e} , \omega \rangle = 1.$$
Moreover, if $\mathbf e$ and $\mathbf f$ are \emph{unbiased} orthonormal bases, then 
$$\langle Z_{\mathbf e} , Z_{\mathbf f} \rangle = 1.$$
\end{lem}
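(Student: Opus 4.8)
The plan is to reduce both identities to computing Hilbert--Schmidt overlaps of rank-one projectors onto product vectors, using the elementary fact that $\langle \ketbra{u}{u}, \ketbra{v}{v}\rangle = |\langle u, v\rangle|^2$ for the scalar product $\langle X, Y\rangle = \Tr(X^* Y)$. Both objects are of this form: $Z_{\mathbf e} = \sum_i \ketbra{e_i \otimes \bar e_i}{e_i \otimes \bar e_i}$ by \eqref{eq:def-Z}, and $\omega = \ketbra{\Omega}{\Omega}$ with $\ket \Omega = \frac{1}{\sqrt d}\sum_a \ket{aa}$. Thus each claimed inner product expands into a sum of squared overlaps of product vectors, and the whole argument becomes a short bookkeeping computation.

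First I would pin down the conjugation convention, since this is the only place that requires care. Writing $\ket{e_i} = \sum_k (e_i)_k \ket k$, the vectorization behind \eqref{eq:def-Z} forces $\ket{\bar e_i} = \sum_l \overline{(e_i)_l}\,\ket l$, the entrywise complex conjugate in the fixed canonical basis. The single identity I would record and reuse throughout is that conjugation is antiunitary, i.e.\ $\langle \bar u, \bar v\rangle = \overline{\langle u, v\rangle}$ for all $u, v$; this is exactly what will make the relevant overlaps real and nonnegative.

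For the first identity I would compute the overlap with the maximally entangled vector: $\langle e_i \otimes \bar e_i, \Omega\rangle = \frac{1}{\sqrt d}\sum_a \langle e_i, a\rangle\langle \bar e_i, a\rangle = \frac{1}{\sqrt d}\sum_a \overline{(e_i)_a}\,(e_i)_a = \frac{1}{\sqrt d}\,\|e_i\|^2 = \frac{1}{\sqrt d}$, where the two conjugations combine into $|(e_i)_a|^2$ and normalization of $e_i$ does the rest. Squaring gives $1/d$ for each $i$, and summing over the $d$ basis vectors yields $\langle Z_{\mathbf e}, \omega\rangle = 1$, independently of the basis.

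For the second identity the product structure gives $\langle e_i \otimes \bar e_i, f_j \otimes \bar f_j\rangle = \langle e_i, f_j\rangle \langle \bar e_i, \bar f_j\rangle = \langle e_i, f_j\rangle\,\overline{\langle e_i, f_j\rangle} = |\langle e_i, f_j\rangle|^2$ by the antiunitarity recorded above, so that the squared overlap is $|\langle e_i, f_j\rangle|^4$. When $\mathbf e$ and $\mathbf f$ are unbiased this equals $d^{-2}$ for every pair $(i,j)$, and summing over the $d^2$ pairs gives $\langle Z_{\mathbf e}, Z_{\mathbf f}\rangle = 1$. The computation is otherwise entirely routine; the one genuine subtlety, and the step I would be most careful about, is precisely the antiunitary identity $\langle \bar e_i, \bar f_j\rangle = \overline{\langle e_i, f_j\rangle}$, since mishandling the conjugation would destroy the cancellation that collapses each overlap to a nonnegative power of $|\langle e_i, f_j\rangle|$ and hence break both counts.
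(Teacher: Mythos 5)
Your proof is correct: the paper states Lemma \ref{lem:Z-and-omega} without proof, treating it as a routine computation, and your argument is precisely the verification the authors leave implicit --- expanding $\omega = \ketbra{\Omega}{\Omega}$ and $Z_{\mathbf e}$ into rank-one projectors, so that each inner product becomes a sum of squared overlaps $|\langle e_i \otimes \bar e_i, \cdot\rangle|^2$, with the antiunitary identity $\langle \bar u, \bar v\rangle = \overline{\langle u, v\rangle}$ collapsing each term to $|(e_i)_a|^2/d$ (first claim) or $|\langle e_i, f_j\rangle|^4 = d^{-2}$ (second claim). You are also right to flag the conjugation convention as the one delicate point; your handling of it is consistent with the definition of $Z_{\mathbf e}$ in \eqref{eq:def-Z}.
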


Let us close this section by mentioning how the matrices $G$ behave when mixing noise into a quantum channel $\Phi$. This property will be very useful in what follows when investigating the compatibility robustness of some classes of quantum channels.

\begin{lem}\label{lem:G-scaling}
Given a quantum channel $\Phi:\mathcal L(H_d) \to \mathcal L(H_d)$, consider its noisy version 
$$\Phi_t := t \Phi + (1-t) \Delta,$$
where $\Delta(X) = (\Tr X) I/d$ is the completely depolarizing channel and $t \in [0,1]$ is some parameter. Then, for any orthonormal basis $\mathbf e$, 
$$G_{\Phi_t, \mathbf e} = t^2 G_{\Phi, \mathbf e} + (1-t^2) \omega,$$
where $\omega$ is the maximally entangled state (note that $\omega = G_{\Delta, \mathbf e}$).
\end{lem}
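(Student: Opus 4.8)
The plan is to reduce the whole statement to a single computation, namely that of the operator $\Phi_t^*(\ketbra{e_i}{e_i})$, and then to substitute the result into the definition \eqref{eq:def-G}. First I would pass to adjoints. Since taking the adjoint is linear, $\Phi_t^* = t\,\Phi^* + (1-t)\,\Delta^*$, and a one-line check against the Hilbert--Schmidt pairing shows that $\Delta$ is self-adjoint, so that $\Delta^*(\ketbra{e_i}{e_i}) = \Delta(\ketbra{e_i}{e_i}) = I_d/d$, using $\Tr\ketbra{e_i}{e_i}=1$. Writing $P_i := \Phi^*(\ketbra{e_i}{e_i})$, this gives the key identity
\begin{equation*}
\Phi_t^*(\ketbra{e_i}{e_i}) = t\,P_i + (1-t)\,\frac{I_d}{d}.
\end{equation*}

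Next I would feed this into \eqref{eq:def-G}. Abbreviating $N_i := t\,P_i + (1-t)\,I_d/d$ and working with the vectorization $\ket{\cdot}$ already used in the paper (so that $\ket{I_d/d} = \ket{I_d}/d$ and, as in Remark \ref{rem:G-larger-omega}, $\ketbra{I_d}{I_d} = d\,\omega$), each rank-one numerator $\ketbra{N_i}{N_i}$ expands into four contributions: a $t^2\ketbra{P_i}{P_i}$ term, two cross terms proportional to $\ketbra{P_i}{I_d}$ and $\ketbra{I_d}{P_i}$, and a pure $\ketbra{I_d}{I_d}$ term. The mechanism driving the result is the normalization $\sum_{i=1}^d P_i = \Phi^*(I_d) = I_d$ (trace preservation of $\Phi$), equivalently $\sum_i \ket{P_i} = \ket{I_d}$. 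Summing over $i$, the first contribution assembles into $t^2 G_{\Phi,\mathbf e}$, while the two cross terms and the pure term all collapse onto multiples of $\ketbra{I_d}{I_d} = d\,\omega$. Tracking the scalars, the cross terms contribute $2t(1-t)\,\omega$ and the pure term $(1-t)^2\,\omega$, so the total $\omega$-coefficient is $2t(1-t) + (1-t)^2 = (1-t)(1+t) = 1-t^2$, exactly as claimed.

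The one delicate ingredient --- and the step I would treat most carefully --- is the trace denominator $\Tr\Phi_t^*(\ketbra{e_i}{e_i})$ in \eqref{eq:def-G}, which a priori depends on $i$ through $\Tr P_i = \langle e_i|\Phi(I_d)|e_i\rangle$. The clean collapse of the cross terms onto $\omega$ described above requires all these denominators to coincide, and this happens, with common value $t\cdot 1 + (1-t) = 1$, precisely when $\Phi$ is unital, i.e.\ $\Phi(I_d)=I_d$ --- the setting of every application in Sections \ref{sec:Schur} and \ref{sec:depolarizing}, as already noted after Theorem \ref{thm:incompatibility-criterion}. I would therefore carry out the argument with the denominators trivialized to $1$; the main obstacle is exactly ensuring that these normalizing traces factor out uniformly, so that they do not disturb the coefficient bookkeeping that produces the factor $1-t^2$.
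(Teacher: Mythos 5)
Your expansion is exactly the ``direct proof using formula \eqref{eq:def-G}'' that the paper's one-line proof gestures at (its only alternative being the citation to \cite[Proposition 5.3]{heinosaari2022order}), and your bookkeeping is correct as you run it: with $P_i := \Phi^*(\ketbra{e_i}{e_i})$ and unit denominators, the $t^2$-terms assemble to $t^2 G_{\Phi,\mathbf e}$, while the cross and pure terms collapse, via $\sum_i \ket{P_i} = \ket{I_d}$, $\sum_i \Tr P_i = d$ and $\ketbra{I_d}{I_d} = d\,\omega$, onto $\bigl(2t(1-t)+(1-t)^2\bigr)\omega = (1-t^2)\,\omega$.

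What deserves emphasis is that the caveat you isolate at the end is not a removable precaution but a genuine restriction: for non-unital $\Phi$ the identity as stated in the lemma is actually \emph{false}, so trivializing the denominators is the only way your (or any direct) computation can close. A concrete counterexample: take $d=2$ and the amplitude damping channel with Kraus operators $K_1 = \ketbra{0}{0} + \tfrac{1}{\sqrt 2}\ketbra{1}{1}$, $K_2 = \tfrac{1}{\sqrt 2}\ketbra{0}{1}$, and the canonical basis, so that $P_1 = \mathrm{diag}(1,\tfrac 12)$, $P_2 = \mathrm{diag}(0,\tfrac 12)$ with $\Tr P_1 = \tfrac 32 \neq 1 \neq \tfrac 12 = \Tr P_2$. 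At $t = \tfrac 12$ one has $N_1 = \mathrm{diag}(\tfrac 34, \tfrac 12)$, $N_2 = \mathrm{diag}(\tfrac 14, \tfrac 12)$, and the $\ketbra{00}{00}$ matrix element of $G_{\Phi_{1/2},\can}$ equals
\begin{equation*}
\frac{(3/4)^2}{5/4} + \frac{(1/4)^2}{3/4} = \frac{9}{20} + \frac{1}{12} = \frac{8}{15},
\end{equation*}
whereas the same entry of $t^2 G_{\Phi,\can} + (1-t^2)\omega$ equals $\tfrac 14 \cdot \tfrac 23 + \tfrac 34 \cdot \tfrac 12 = \tfrac{13}{24} \neq \tfrac{8}{15}$. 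So your proof establishes the lemma exactly in the regime where it is true and used: it suffices that $\Tr \Phi_{\phantom t}^*(\ketbra{e_i}{e_i}) = \langle e_i|\Phi(I_d)|e_i\rangle = 1$ for every $i$ (unitality of $\Phi$ being the natural sufficient condition; note that since the traces sum to $d$, requiring them merely to coincide already forces them all to equal $1$), and this covers every application in Sections \ref{sec:Schur} and \ref{sec:depolarizing}. As a side remark, your same four-term expansion does prove a fully general POVM-level scaling law if the noise is taken to be the smeared one $A_i \mapsto t A_i + (1-t)\Tr(A_i) I_d/d$, which leaves the denominators untouched; this coincides with the channel-induced noise $t A_i + (1-t) I_d/d$ precisely in the unit-trace case, which is another way of seeing why the unitality hypothesis is where the general statement breaks.
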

\begin{proof}
This can be either proven directly using formula \eqref{eq:def-G} or by using the corresponding result for POVMs, see, e.g.~\cite[Proposition 5.3]{heinosaari2022order}.
\end{proof}

\section{Incompatibility of two Schur channels}\label{sec:Schur}

As a first application of our newly introduced incompatibility criterion for quantum channels, we consider \emph{Schur channels}. A Schur map is a linear map of the form 
$$\Sigma_B(X) = B \circ X,$$
where $B$ is a $d \times d$ complex matrix. The map $\Sigma_B$ is completely positive if and only if the matrix $B$ is positive semidefinite, and it is trace preserving if the diagonal of $B$ is the identity: $B_{ii} = 1$ for all $i$. If both conditions are satisfied, we call the map $\Sigma_B$ a Schur channel (sometimes also called a Schur multiplier), see \cite{paulsen2002schur, harris2018schur,watrous2018theory,singh2021diagonal}. Schur channels have received a lot of attention in operator algebra and quantum information theory, and they contain as examples the identity channel $\mathrm{id} = \Sigma_J$, where $J$ is the all 1s matrix, and the dephasing channel (the conditional expectation on the diagonal sub-algebra) $\mathrm{diag} = \Sigma_I$. 

For a Schur channel $\Sigma_B$, we have
$$G_{\Sigma_B, \mathbf e} = \ketbra{\bar B}{\bar B} \circ Z_{\mathbf e},$$
for any orthonormal basis $\mathbf e$ (recall the form of the matrix $Z$ from \eqref{eq:def-Z}). If $e$ is the canonical basis, we have 
$$G_{\Sigma_B,\can} = Z_\can.$$

Consider now a basis $\mathbf f$ which is unbiased with respect to the canonical basis; in other works, the elements of $\mathbf f$ form the columns of a Hadamard matrix $U$: $\ket{f_j} = U \ket j$ for all $j$. An important example of such a basis is the \emph{Fourier basis}: 
$$f_j(s) = \exp(2 \pi \mathrm{i} / d)^{js}, \qquad \forall j,s \in [d].$$

\begin{lem}\label{lem:Schur-orthogonal}
If $B,C$ are two positive semidefinite matrices with unit diagonal, and $\can$ and $\mathbf f$ are unbiased, then 
$$G_{\Sigma_B, \can} - \omega \perp G_{\Sigma_C, \mathbf f}- \omega.$$
\end{lem}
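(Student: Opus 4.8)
The plan is to compute the Hilbert--Schmidt inner product $\langle G_{\Sigma_B, \can} - \omega, G_{\Sigma_C, \mathbf f} - \omega\rangle$ directly and show it vanishes. I would begin by recording the two structural facts supplied just before the statement: $G_{\Sigma_B, \can} = Z_\can = \sum_a \ketbra{aa}{aa}$, and $G_{\Sigma_C, \mathbf f} = \ketbra{\bar C}{\bar C} \circ Z_{\mathbf f}$. The first shows that the left-hand factor equals $Z_\can - \omega$, and I would observe that both $Z_\can$ and $\omega = \frac{1}{d}\sum_{a,c}\ketbra{aa}{cc}$ are supported on the ``diagonal block'' of $\mathcal L(H_d^{\otimes 2})$, i.e.\ on the matrix entries indexed by pairs $\big((a,a),(c,c)\big)$ with $a,c \in [d]$; hence $Z_\can - \omega$ is supported there as well.

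The crux of the argument is that, on this diagonal block, the Hadamard multiplier $\ketbra{\bar C}{\bar C}$ acts trivially, precisely because $C$ has unit diagonal. Concretely, for any matrix $W$ supported on the diagonal block one has
$$\langle W, \ketbra{\bar C}{\bar C} \circ Z_{\mathbf f}\rangle = \langle W, Z_{\mathbf f}\rangle,$$
since the only relevant entries of the multiplier are $(\ketbra{\bar C}{\bar C})_{(aa),(cc)} = \bar C_{aa}\, C_{cc} = 1$. Applying this with $W = Z_\can - \omega$ replaces $G_{\Sigma_C, \mathbf f}$ by $Z_{\mathbf f}$ inside the inner product, reducing everything to overlaps among the matrices $Z_\can$, $Z_{\mathbf f}$ and $\omega$.

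From here I would invoke Lemma \ref{lem:Z-and-omega}: since $\can$ and $\mathbf f$ are unbiased, $\langle Z_\can, Z_{\mathbf f}\rangle = 1$, while $\langle \omega, Z_{\mathbf f}\rangle = \langle Z_\can, \omega\rangle = 1$ for any orthonormal bases; together with $\langle \omega, \omega\rangle = \Tr \omega^2 = 1$ (as $\omega$ is a rank-one projector), this yields
$$\langle Z_\can - \omega,\, G_{\Sigma_C, \mathbf f} - \omega\rangle = \langle Z_\can - \omega,\, Z_{\mathbf f}\rangle - \langle Z_\can - \omega,\, \omega\rangle = (1 - 1) - (1 - 1) = 0,$$
which is the claimed orthogonality. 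Note that $B$ and $C$ enter only through their unit diagonals, so the conclusion is insensitive to the off-diagonal data of either matrix.

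I expect the only delicate point to be the index bookkeeping for the vectorization convention $\ket{\ketbra{u}{v}} = \ket{u \otimes \bar v}$ of \eqref{eq:def-Z}, in particular verifying carefully that both $Z_\can$ and $\omega$ live on the diagonal block and that $(\ketbra{\bar C}{\bar C})_{(aa),(cc)} = 1$; once this is set up, the reduction to Lemma \ref{lem:Z-and-omega} is immediate and no genuine computation remains. An equivalent, slightly more pedestrian route avoids the ``diagonal block'' language and instead evaluates the four overlaps $\langle Z_\can, G_{\Sigma_C,\mathbf f}\rangle$, $\langle Z_\can, \omega\rangle$, $\langle \omega, G_{\Sigma_C,\mathbf f}\rangle$ and $\langle \omega, \omega\rangle$ separately, each turning out to equal $1$ after using $|\langle a, f_j\rangle|^2 = 1/d$ and the unit-diagonal normalization $\Tr \Sigma_C^*(\ketbra{f_j}{f_j}) = 1$, so that the expansion gives $1 - 1 - 1 + 1 = 0$.
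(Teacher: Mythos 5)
Your proof is correct and takes essentially the same route as the paper: your observation that the multiplier $\ketbra{\bar C}{\bar C}$ has all entries equal to $1$ on the diagonal block (by the unit diagonal of $C$), so it can be dropped against $Z_{\can}-\omega$, is exactly the paper's step $\ketbra{C}{C} \circ Z_{\can} = Z_{\can}$ (and similarly for $\omega$), obtained there by moving the Schur multiplier across the Hilbert--Schmidt inner product. Both arguments then conclude with the same overlaps $\langle Z_{\can}, Z_{\mathbf f}\rangle = \langle Z_{\can},\omega\rangle = \langle \omega, Z_{\mathbf f}\rangle = \langle\omega,\omega\rangle = 1$ from Lemma \ref{lem:Z-and-omega}, so there is nothing to fix.
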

\begin{proof}
Expanding the scalar product and using Lemma \ref{lem:Z-and-omega}, we need to show that
$$\langle Z_\can, \ketbra{\bar C}{\bar C} \circ Z_{\mathbf f} \rangle = \langle \omega , \ketbra{\bar C}{\bar C} \circ Z_{\mathbf f} \rangle.$$
Let us work out the left-hand-side:
$$\langle Z_\can, \ketbra{\bar C}{\bar C} \circ Z_{\mathbf f} \rangle = \langle \ketbra{C}{C} \circ Z_\can, Z_{\mathbf f} \rangle = \langle Z_\can, Z_{\mathbf f} \rangle = 1,$$
where we have used Lemma \ref{lem:Z-and-omega} and the fact that $\ketbra{C}{C} \circ Z_\can = Z_\can$, which follows from the fact that $C$ has unit diagonal. The right-hand-side can be dealt with in the same manner. 
\end{proof}

For a $d \times d$ matrix $B$ with unit diagonal, define the real parameter $\beta(B)$ as follows:
\begin{equation}\label{eq:def-beta}
\beta(B) := \frac{1}{d-1}\left( \frac 1 d \sum_{i,j=1}^d |B_{ij}|^2 - 1\right) = \frac{1}{d-1} \sum_{i \neq j \in [d]} |B_{ij}|^2.
\end{equation}
Recall that the torus $\mathbb T^d$ is the set of vectors $b \in \mathbb C^d$ with $|b_i|=1$ for all $i = 1, \ldots, d$.

\begin{lem}
If $B$ is a $d \times d$ positive semidefinite matrix with unit diagonal, then 
$$0 \leq \beta(B) \leq 1,$$
with $\beta(B) = 0$ iff $B = I$ and $\beta(B) = 1$ iff $B = \ketbra{b}{b}$ for a vector $b \in \mathbb T^d$.
\end{lem}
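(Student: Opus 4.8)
The plan is to funnel everything through the Frobenius norm identity $\sum_{i,j}|B_{ij}|^2 = \Tr(B^2)$ and then read off the two bounds from the spectrum of $B$, using positivity and the unit-diagonal normalization. First I would record the two structural facts that the hypotheses provide: since $B_{ii}=1$ for all $i$ we have $\Tr B = d$, and since $B$ is positive semidefinite its eigenvalues $\lambda_1,\dots,\lambda_d$ are nonnegative with $\sum_k \lambda_k = d$ and $\Tr(B^2)=\sum_k \lambda_k^2$. Rewriting the definition \eqref{eq:def-beta} as $\beta(B)=\tfrac{1}{d-1}\bigl(\tfrac1d\Tr(B^2)-1\bigr)$, the entire statement becomes equivalent to the single chain of inequalities $d \le \Tr(B^2) \le d^2$.

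For the lower bound I would split off the diagonal, writing $\Tr(B^2)=\sum_{i,j}|B_{ij}|^2 = d + \sum_{i\neq j}|B_{ij}|^2 \ge d$, which makes $\beta(B)\ge 0$ immediate; equality forces every off-diagonal entry to vanish, hence $B=I$. For the upper bound I would pass to eigenvalues: $\Tr(B^2)=\sum_k \lambda_k^2 \le \bigl(\sum_k \lambda_k\bigr)^2 = d^2$, the inequality being exactly the statement that the cross terms $\sum_{k\neq l}\lambda_k\lambda_l$ are nonnegative (equivalently $\Tr(B^2)\le \lambda_{\max}\Tr B \le (\Tr B)^2$). This gives $\beta(B)\le 1$ with no further work.

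The delicate step, and the only one I expect to require care, is the equality analysis for the upper bound. Equality $\Tr(B^2)=d^2$ forces all cross terms $\lambda_k\lambda_l$ with $k\neq l$ to vanish, so at most one eigenvalue is nonzero; since $\Tr B = d>0$ this means exactly one eigenvalue equals $d$ and $B$ has rank one, say $B=\ketbra{v}{v}$ with $\|v\|^2=d$. The unit-diagonal condition then reads $|v_i|^2=1$ for each $i$, so setting $b_i:=v_i$ yields $b\in\mathbb T^d$ and $B=\ketbra{b}{b}$; conversely any such rank-one matrix has $|B_{ij}|=1$ for all $i,j$, giving $\Tr(B^2)=d^2$ and $\beta(B)=1$. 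The inequalities themselves are essentially automatic once the problem is phrased spectrally, so I would spend the bulk of the writeup making the two equality characterizations precise.
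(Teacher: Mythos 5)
Your proof is correct and takes essentially the same route as the paper: your spectral inequality $\sum_k \lambda_k^2 \le \bigl(\sum_k \lambda_k\bigr)^2$ is exactly the paper's Schatten-norm ordering $\|B\|_2^2 \le \|B\|_1^2 = (\Tr B)^2$ unpacked on eigenvalues, and the lower bound via the nonnegative off-diagonal sum is identical. Your equality analysis (cross terms vanish, hence rank one, hence $B=\ketbra{b}{b}$ with $b\in\mathbb T^d$ by the unit diagonal, plus the converse) is in fact spelled out more carefully than the paper's brief remark.
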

\begin{proof}
The non-negativity of $\beta$, as well as the equality case, follows directly from the definition \eqref{eq:def-beta}. For the upper bound, use the ordering of the $1,2$-Schatten norms of $B$ to write
$$\sum_{i,j=1}^d |B_{ij}|^2 = \|B\|_2^2 \leq \|B\|_1^2 = (\Tr B)^2 = d^2,$$
proving the inequality. Equality holds if $B$ is rank one, which, together with the condition on the diagonal, proves that  $B = \ketbra{b}{b}$ for some vector $b \in \mathbb T^d$.
\end{proof}

We can now, using Theorem \ref{thm:incompatibility-criterion}, provide a new incompatibility criterion for Schur channels. 

\begin{thm}
Consider two positive semidefinite matrices $B,C$ with unit diagonal, and the corresponding depolarized Schur channels
\begin{align*}
    \Phi_s(X) &= s \Sigma_B(X) + (1-s) \Delta(X) = s B \circ X + (1-s) (\Tr X) \frac I d\\
    \Psi_t(X) &= t \Sigma_C(X) + (1-t) \Delta(X) = t C \circ X + (1-t) (\Tr X) \frac I d.
\end{align*}
If $s^2 + \beta(C) t^2 >1$, then the channels $\Phi_s$ and $\Psi_t$ are incompatible. We have thus an upper bound for the compatibility region from Definition \ref{def:Gamma}:
\begin{equation}\label{eq:Gamma-set-Schur}
  \Gamma_{\Phi,\Psi} \subseteq \{ (s,t) \in [0,1]^2 \, : \, s^2 + \beta(C) t^2 \leq 1 \text{ and } \beta(B) s^2 +  t^2 \leq 1\},  
\end{equation}
where $\Phi := \Phi_1$ and $\Psi := \Psi_1$.
\end{thm}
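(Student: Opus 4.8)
The plan is to instantiate the orthogonal incompatibility criterion of Proposition~\ref{prop:orthogonal-incompatibility-criterion} with the two channels $\Phi_s$ and $\Psi_t$, choosing the canonical basis $\can$ for the first channel and a basis $\mathbf f$ unbiased with respect to $\can$ (e.g.\ the Fourier basis) for the second. The point of this choice is that it activates Lemma~\ref{lem:Schur-orthogonal}, which supplies exactly the orthogonality hypothesis needed to solve the SDP \eqref{eq:SDP-incompatibility-criterion} in closed form rather than numerically.

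First I would reduce to the undepolarized matrices. By Lemma~\ref{lem:G-scaling}, $G_{\Phi_s,\can} - \omega = s^2\,(G_{\Sigma_B,\can} - \omega)$ and $G_{\Psi_t,\mathbf f} - \omega = t^2\,(G_{\Sigma_C,\mathbf f} - \omega)$. Since orthogonality is preserved under scalar multiplication, Lemma~\ref{lem:Schur-orthogonal} gives $G_{\Phi_s,\can} - \omega \perp G_{\Psi_t,\mathbf f} - \omega$, so Proposition~\ref{prop:orthogonal-incompatibility-criterion} applies with $N=2$ and the value of the SDP equals $-1 + \Tr G_{\Phi_s,\can} + \Tr G_{\Psi_t,\mathbf f}$. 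It then remains only to compute these two traces, again via Lemma~\ref{lem:G-scaling} together with $\Tr \omega = 1$.

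The main computation, and the one step requiring genuine work, is evaluating $\Tr G_{\Sigma_B,\can}$ and $\Tr G_{\Sigma_C,\mathbf f}$. For the canonical basis, the unit-diagonal hypothesis forces $G_{\Sigma_B,\can} = Z_{\can}$ (the off-diagonal data of $B$ is invisible to $\can$), so $\Tr G_{\Sigma_B,\can} = d$. For the unbiased basis $\mathbf f$, I would use unitality of Schur channels to see that every denominator in \eqref{eq:def-G} equals $1$, and then use $\Sigma_C^* = \Sigma_{\bar C}$ to compute $\Tr G_{\Sigma_C,\mathbf f} = \sum_{i=1}^d \| \bar C \circ \ketbra{f_i}{f_i} \|_2^2$. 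Expanding the Hilbert--Schmidt norm and invoking unbiasedness, $|f_i(k)|^2 = 1/d$, each summand equals $d^{-2}\|C\|_2^2$, giving $\Tr G_{\Sigma_C,\mathbf f} = d^{-1}\|C\|_2^2 = 1 + (d-1)\beta(C)$ directly from the definition~\eqref{eq:def-beta} of $\beta$. Substituting, the SDP value becomes $1 + (d-1)\bigl(s^2 + \beta(C)\,t^2\bigr)$, which exceeds $d$ precisely when $s^2 + \beta(C)\,t^2 > 1$; Theorem~\ref{thm:incompatibility-criterion} then yields incompatibility.

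Finally, I would obtain the second inequality by symmetry: swapping the roles of the two bases (now $\mathbf f$ for $\Phi_s$ and $\can$ for $\Psi_t$, with the $B \leftrightarrow C$ version of Lemma~\ref{lem:Schur-orthogonal}) produces incompatibility whenever $\beta(B)\,s^2 + t^2 > 1$. The stated inclusion for $\Gamma_{\Phi,\Psi}$ is then immediate by contraposition: a point $(s,t)$ in the compatibility region can satisfy neither strict inequality, hence lies in the intersection $\{ s^2 + \beta(C)\,t^2 \le 1 \} \cap \{ \beta(B)\,s^2 + t^2 \le 1 \}$. I expect no serious obstacle here; the only care needed is in the unbiasedness computation of $\Tr G_{\Sigma_C,\mathbf f}$ and in confirming that the denominators in \eqref{eq:def-G} are trivial, both of which follow from the unit-diagonal (equivalently, unitality) property of Schur channels.
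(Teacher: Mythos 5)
Your proposal is correct and follows essentially the same route as the paper's proof: canonical basis paired with an unbiased basis, Lemma \ref{lem:Schur-orthogonal} plus Lemma \ref{lem:G-scaling} to invoke Proposition \ref{prop:orthogonal-incompatibility-criterion}, the trace computation yielding $1+(d-1)\bigl(s^2+\beta(C)t^2\bigr)$, and a basis swap for the second inequality. Your only deviations are cosmetic --- computing $\Tr G_{\Sigma_C,\mathbf f}$ via $\Sigma_C^*=\Sigma_{\bar C}$ and Hilbert--Schmidt norms instead of the paper's $\ketbra{\bar C}{\bar C}\circ Z_{\mathbf f}$ formula, and explicitly noting that scaling by $s^2,t^2$ preserves the orthogonality hypothesis, a point the paper leaves implicit.
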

\begin{proof}
The proof is an application of Theorem \ref{thm:incompatibility-criterion}. To start, let us compute the ``$G$'' matrices associated to these channels, taking, respectively, the canonical basis $\can$, and any unbiased base $\mathbf f$ (e.g.~the Fourier basis); this choice is inspired by Lemma \ref{lem:Schur-orthogonal} and Proposition \ref{prop:orthogonal-incompatibility-criterion}. Applying these results, as well as the scaling Lemma \ref{lem:G-scaling}, we have 
\begin{align*}
    G_{\Phi_s, \can} &= s^2 G_{\Sigma_B, \can} + (1-s^2) \omega  = \omega + s^2(Z_\can - \omega)\\
    G_{\Psi_t, \mathbf f} &= t^2 G_{\Sigma_C, \mathbf f} + (1-t^2) \omega = \omega + t^2(\ketbra{\bar C}{\bar C} \circ Z_{\mathbf f} - \omega).
\end{align*}
Hence, the value of the SDP \eqref{eq:SDP-incompatibility-criterion} is given by (see Proposition \ref{prop:orthogonal-incompatibility-criterion})
$$1-2 + \Tr G_{\Phi_s, \can} + \Tr G_{\Psi_t, \mathbf f} = s^2(d-1) + 1-t^2 + t^2 \Tr[\ketbra{\bar C}{\bar C} \circ Z_{\mathbf f}].$$
We can evaluate 
$$ \Tr[\ketbra{\bar C}{\bar C} \circ Z_{\mathbf f}] =  \frac 1 d \sum_{i,j=1}^d |C_{ij}|^2,$$
and, using the parameter $\beta(C)$ from \eqref{eq:def-beta}, the incompatibility criterion reads
$$s^2 + \beta(C) t^2 >1,$$
which is the first claim. The second claim follows by swapping the roles of the unbiased bases $\can$ and $\mathbf f$.
\end{proof}

\begin{remark}
One can not easily generalize the result above to more that two Schur channels. This is due to the fact that one has to fix one of the bases in Theorem \ref{thm:incompatibility-criterion} to be the canonical basis. This is due to the fact that the Hadamard product used to define Schur channels is adapted to the canonical basis. We leave the generalization of the result (and method) above for three or more Schur channels open. 
\end{remark}

We compare in Figure \ref{fig:Schur} the criterion from the Theorem above with the actual incompatibility thresholds for some particular Schur channels, concluding that the incompatibility criterion is close to being exact. 

\begin{figure}
    \centering
    \includegraphics[width=.45\textwidth]{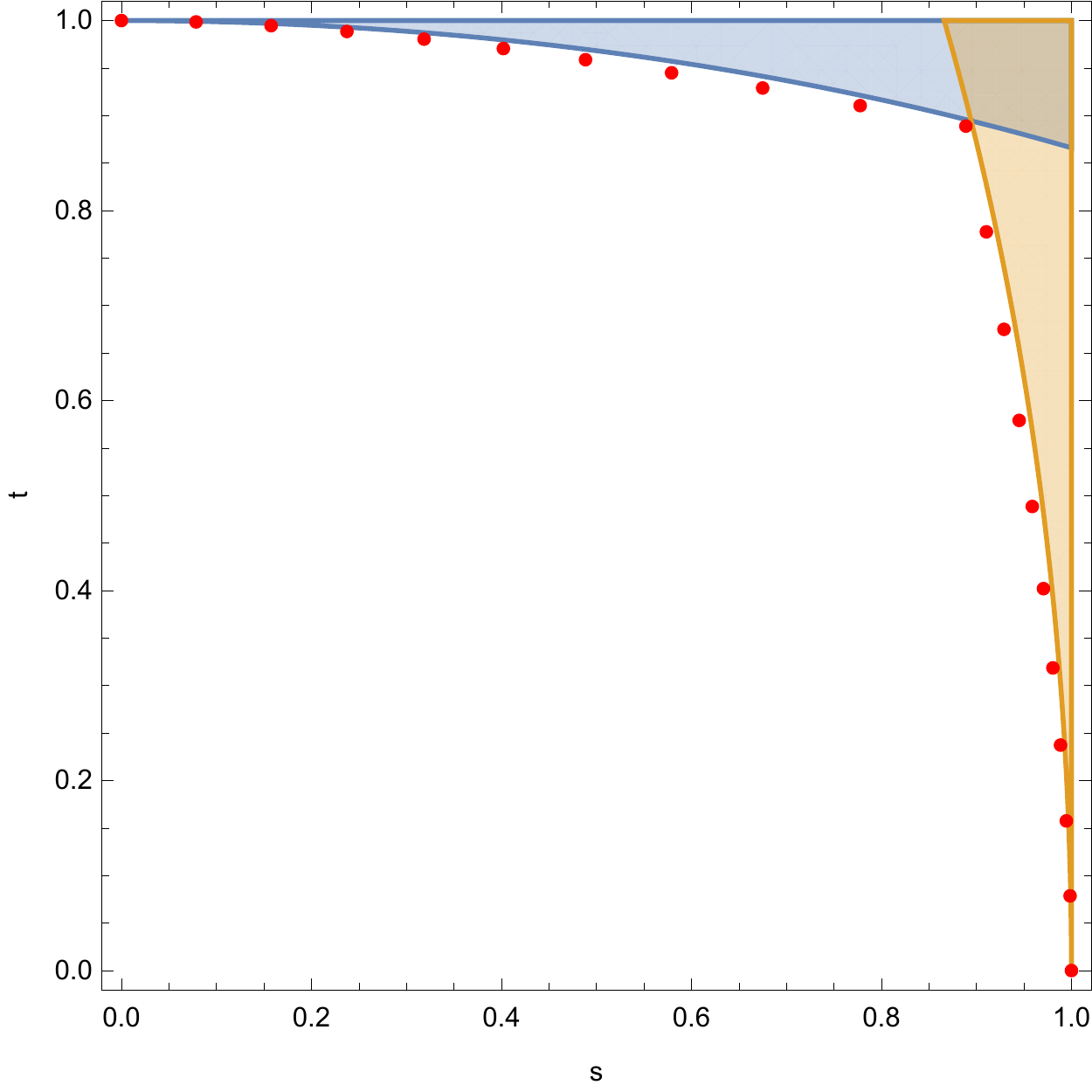} \qquad \includegraphics[width=.45\textwidth]{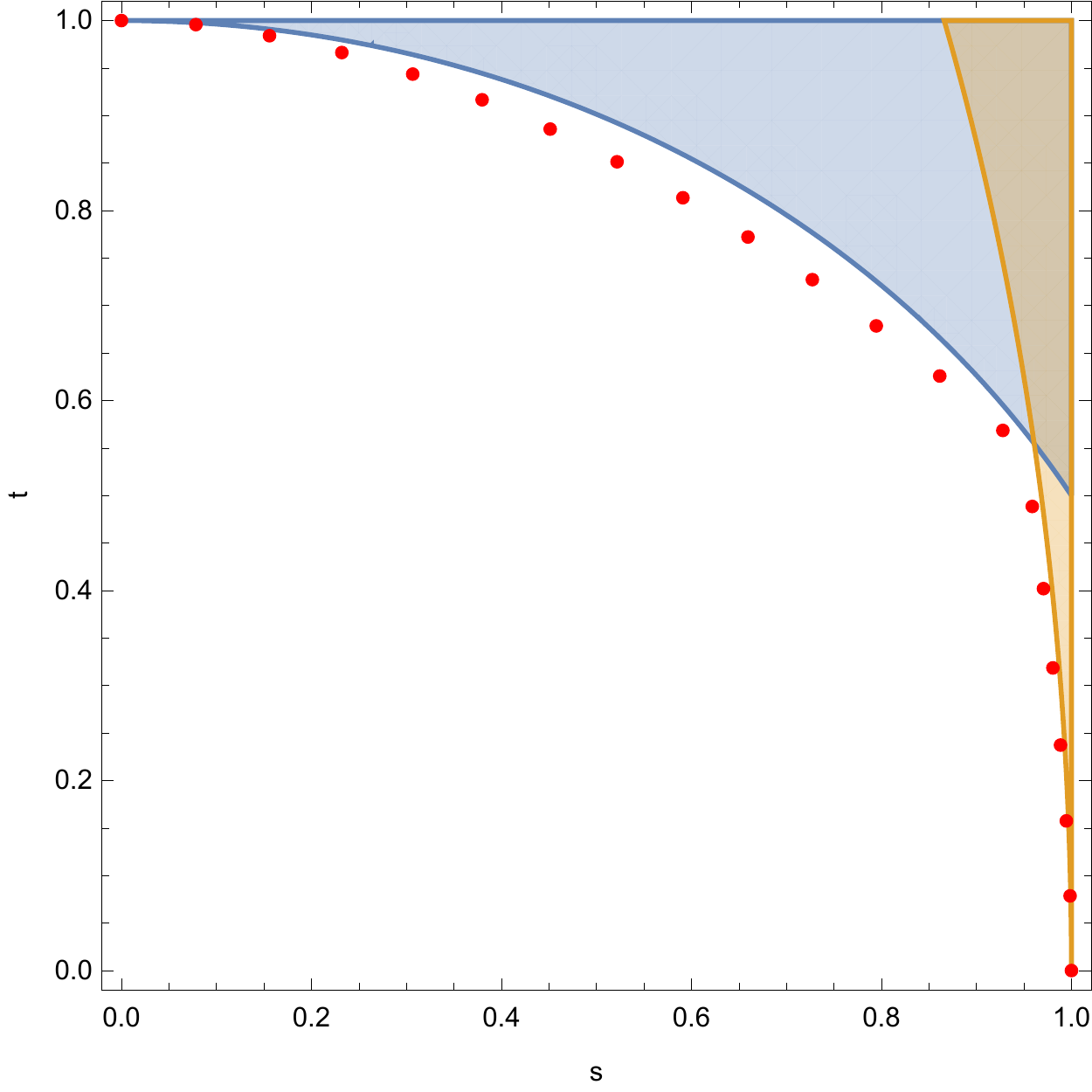} 
    \caption{The Fishere information based incompatibility criterion for Schur channels. In the left panel, we consider two noisy copies of the Schur channel corresponding to $B= \begin{bmatrix} 1 & 1/2 \\ 1/2 & 1 \end{bmatrix}$. In the right panel, we consider noisy versions of $\Sigma_B$ and $\Sigma_C$, where $C = \begin{bmatrix} 1 & \sqrt{3/4} \\ \sqrt{3/4} & 1 \end{bmatrix}$. Shaded regions correspond to the conditions from \eqref{eq:Gamma-set-Schur}, while the red dots correspond to the maximally compatible channels in the respective directions.}
    \label{fig:Schur}
\end{figure}

\section{Channel assemblages}\label{sec:assemblages}

The way in which several quantum measurement and quantum channels can be incompatible has been studied extensively in the literature \cite{liang2011specker,kunjwal2014quantum,yadavalli2020bell,sun2020note}. The kind of (in-)compatibility structures that can be found in Nature, and their relation to other important manifestations of non-locality (such as Bell inequality violations) is clearly a crucial question at the foundation of quantum theory.

Let $\{\Phi_i\}_{i=1}^{N}$ be a \emph{channel assemblage}, that is an $N$-tuple of quantum channels. If $\{\Phi_i\}_{i=1}^{N}$ are incompatible, there does not exist a joint quantum channel for \emph{all} the $N$ channels. However, a joint channel may exist when we consider certain subset of $\{\Phi_i\}_{i=1}^{N}$. In other words, some subsets of $\{\Phi_i\}_{i=1}^{N}$ may be compatible, even though the whole set is incompatible. Obviously, if the whole set of $N$ channels is compatible, then so is any subset: if $\Lambda$ is a joint channel for the $N$-tuple. then, for any subset $S \subseteq [N]$ of the channels, $\Lambda_S$, the marginal of $\Lambda$ corresponding to the output indices in $S$
$$\Lambda_S : \mathcal L(H_d) \to \mathcal L\left(\bigotimes_{i \in S} H_d^{(i)} \right)$$
is a joint channel for $\{\Phi_i\}_{i \in S}$; above, we identify the different copies of the output space $\mathcal L(H_d)$ by a superscript. Therefore, it is significant to classify the incompatibility of subsets for a given quantum channel assemblage. A \emph{$K$-subset} of $[N]$ is simply a subset $S \subseteq [N]$ of cardinality $|S| = K$.
 
 \begin{defi}\label{def:assemblage-incompatibility}
Consider a quantum channel assemblage $\mathbf \Phi = \{\Phi_i\}_{i=1}^{N}$ and $1 \leq K \leq N$ an integer. The $N$-tuple $\mathbf \Phi$ is called:
\begin{itemize}
    \item $(N,K)$-compatible if \emph{all} $K$-subsets of $\mathbf \Phi$ are compatible.
    \item $(N,K)$-incompatible if \emph{at least one} $K$-subset of $\mathbf \Phi$ is incompatible.
    \item $(N,K)$-strong incompatible if \emph{all} $K$-subsets of $\mathbf \Phi$ are incompatible.
    \item $(N,K+1)$-genuinely incompatible if it is $(N,K)$-compatible and $(N,K+1)$-incompatible.
     \item $(N,K+1)$-genuinely strong incompatible if it is $(N,K)$-compatible and $(N,K+1)$-strong incompatible.
\end{itemize}
 \end{defi}
 Note that the assemblage $\{\Phi_i\}_{i=1}^{N}$ is compatible if and only if it is $(N,N)$-compatible. The previous definition is strongly inspired by the one from \cite[Section 2]{sun2020note} in the case of POVMs. The incompatibility criterion from Theorem \ref{thm:incompatibility-criterion} can be readily adapted to the previous definition by considering subsets of the PSD constraints in \eqref{eq:SDP-incompatibility-criterion}. We restate it here for the convenience of the reader. We shall apply it in the next section for assemblages of depolarizing channels. 

\begin{thm}\label{thm:S-incompatibility-criterion}
Consider an assemblage $\mathbf \Phi = \{\Phi_i\}_{i=1}^{N}$ of quantum channels acting on $\mathcal L(H_d)$. For a $K$-subset $S$ of $[N]$, and $K$ orthonormal bases $\mathbf e = (\mathbf e^{(1)}, \mathbf e^{(2)},  \ldots, \mathbf e^{(K)})$ of $\mathbb C^d$, define the value of the following semidefinite program
\begin{equation}
\begin{aligned}
    \mathrm{val}(\mathbf \Phi, S, \mathbf e) := \min \quad& \Tr H   \\
    s.t.\quad&H\geq G_{\Phi_i, \mathbf e^{(i)}} \quad \forall i \in S.
\end{aligned}
\end{equation}
If there exists \emph{at least one} $S \in [N]$ and a $K$-tuple of orthonormal bases $\mathbf e$ such that $\mathrm{val}(\mathbf \Phi, S, \mathbf e) > d$, then the assemblage $\Phi$ is $(N,K)$-incompatible. Moreover, if for \emph{all} $K$-subsets $S \subseteq [N]$, there exists a $K$-tuple of bases $\mathbf e_S$ such that $\mathrm{val}(\mathbf \Phi, S, \mathbf e_S) > d$, the assemblage $\Phi$ is $(N,K)$-strong incompatible.
\end{thm}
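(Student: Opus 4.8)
The plan is to reduce both assertions directly to the main criterion, Theorem \ref{thm:incompatibility-criterion}, applied not to the full $N$-tuple but to each relevant sub-tuple of channels. The key (essentially trivial) observation is that, for a fixed $K$-subset $S \subseteq [N]$ and a $K$-tuple of bases $\mathbf e = (\mathbf e^{(i)})_{i \in S}$, the semidefinite program defining $\mathrm{val}(\mathbf \Phi, S, \mathbf e)$ coincides with the SDP \eqref{eq:SDP-incompatibility-criterion} associated with the $K$-tuple of channels $(\Phi_i)_{i \in S}$: the optimization variable $H$ and the objective $\Tr H$ are identical, and the positive-semidefinite constraints $H \geq G_{\Phi_i, \mathbf e^{(i)}}$ range precisely over the indices $i \in S$.

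For the first claim, I would proceed as follows. Suppose there is a $K$-subset $S$ and a $K$-tuple of bases $\mathbf e$ with $\mathrm{val}(\mathbf \Phi, S, \mathbf e) > d$. By the identification above, Theorem \ref{thm:incompatibility-criterion} applies verbatim to the $K$-tuple $(\Phi_i)_{i \in S}$ and certifies that this sub-tuple is incompatible. Since at least one $K$-subset of $\mathbf \Phi$ is therefore incompatible, the assemblage is $(N,K)$-incompatible by Definition \ref{def:assemblage-incompatibility}.

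For the second claim, the same argument is run uniformly over all subsets. Assuming that for every $K$-subset $S$ one can exhibit a witnessing $K$-tuple $\mathbf e_S$ with $\mathrm{val}(\mathbf \Phi, S, \mathbf e_S) > d$, Theorem \ref{thm:incompatibility-criterion} certifies the incompatibility of each sub-tuple $(\Phi_i)_{i \in S}$. As every $K$-subset is then incompatible, the assemblage is $(N,K)$-strong incompatible, once more directly from Definition \ref{def:assemblage-incompatibility}.

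I expect no genuine obstacle here: the statement is a bookkeeping restatement of Theorem \ref{thm:incompatibility-criterion} through the subset-based incompatibility notions of Definition \ref{def:assemblage-incompatibility}. The only point deserving (mild) care is that in the strong-incompatibility case the bases are allowed to depend on $S$ — which is precisely why the hypothesis quantifies a separate $\mathbf e_S$ for each $S$ — and that the constraint set indexed by $S$ must not be conflated with the full constraint set indexed by $[N]$ from \eqref{eq:SDP-incompatibility-criterion}.
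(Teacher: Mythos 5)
Your proposal is correct and matches the paper's intended argument: the paper gives no separate proof, remarking only that Theorem \ref{thm:incompatibility-criterion} ``can be readily adapted \dots by considering subsets of the PSD constraints,'' which is precisely your reduction of each $K$-subset $(\Phi_i)_{i \in S}$ to the SDP criterion, followed by Definition \ref{def:assemblage-incompatibility}. Your explicit note that the witnessing bases $\mathbf e_S$ may depend on $S$ is a useful clarification but does not depart from the paper's route.
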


\section{Assemblages of depolarizing channels}\label{sec:depolarizing}

In this section, we address the (in-)compatibility properties of an $N$-tuple of partially depolarizing channels, using the Fisher information based criterion from Theorem \ref{thm:incompatibility-criterion}. Recall that the \emph{partially depolarizing channel} is the linear map $\Phi_t : \mathcal L(H_d) \to \mathcal L(H_d)$ given by
\begin{equation}\label{eq:def-depolarizing-channel}
    \Phi_t=t\cdot \mathrm{id} +(1-t)\Delta,
\end{equation}
where $\mathrm{id}$ is the identity channel $\mathrm(X)=X$ and $\Delta$ is the fully depolarizing channel $\Delta(X)=(\Tr X)I/d$. The parameter $t \in [0,1]$ interpolates between the identity channel and fully depolarizing channel.

We shall study in this section the incompatibility of $N$ partially depolarizing channels \{$\Phi_{t_i}^i\}_{i=1}^N$, for some fixed parameters $t_1, t_2, \ldots, t_N \in [0,1]$, with the help of the criterion from Theorem \ref{thm:incompatibility-criterion}. To do so, let us first compute the ``$G$'' matrices of depolarizing channels, which are just noisy versions of the identity channel. Recall from Eq.~\eqref{eq:def-Z} that, for the identity channel, we have, for an arbitrary basis $\mathbf e$,
$$Z_{\mathbf{e}}=G_{\mathrm{id},\mathbf{e}}=\sum_{i=1}^d \ketbra{e_i \otimes \bar e_i}{e_i \otimes \bar e_i},$$
where $\bar e_i$ denotes the (entrywise) complex conjugate of the vector $e_i$. Hence, by Lemma \ref{lem:G-scaling}, we have 
$$ G_{\Phi_t,\mathbf{e}}=t^2 Z_{\mathbf{e}}+(1-t^2)\omega.$$

As in Section \ref{sec:Schur}, we are going to use the orthogonality of the ``$G$'' matrices in order to put forward analytical incompatibility criteria for depolarizing channels (see Proposition \ref{prop:orthogonal-incompatibility-criterion}). To do so, recall that the $Z_{\mathbf e}$ matrices have tractable scalar products for unbiased bases. As it turns out, mutually unbiased bases \cite{durt2010mutually} will play an important role in what follows. Let $D_d$ be the maximal cardinality of a set of mutually unbiased bases of $\mathbb C^d$. It is known that $3 \leq D_d \leq d+1$ \cite{wootters1989optimal,klappenecker2003constructions,combescure2007mutually}. The upper bounds is attained for all dimensions $d$ which are prime powers; whether it is always attained is an important open problem in quantum information theory, even the case $d=6$ being undecided. 

We now state the main result of this section, an incompatibility criterion for depolarizing channels. 

\begin{prop}\label{prop:depolarizing-compatible}
Let $N$ be an integer such that $N \leq D_d$, the maximal number of mutually unbiased bases of $\mathbb C^d$. Consider $N$ depolarizing channels $\Phi_{t_1}, \ldots, \Phi_{t_N}$, where $t_1, \ldots, t_N \in [0,1]$ are noise parameters. If 
\begin{equation}\label{eq:sum-t-squared}
    t_1^2 + t_2^2 + \cdots + t_N^2 >1
\end{equation}
then the $N$ depolarizing channels $\Phi_{t_i}$ are incompatible. 
\end{prop}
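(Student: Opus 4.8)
The plan is to invoke the incompatibility criterion of Theorem \ref{thm:incompatibility-criterion}, and more precisely its analytically solvable incarnation from Proposition \ref{prop:orthogonal-incompatibility-criterion}. The freedom we exploit is the choice of the orthonormal bases $\mathbf e^{(1)}, \ldots, \mathbf e^{(N)}$ entering the SDP \eqref{eq:SDP-incompatibility-criterion}. Since by hypothesis $N \leq D_d$, we may pick $\mathbf e^{(1)}, \ldots, \mathbf e^{(N)}$ to be $N$ pairwise mutually unbiased bases of $\mathbb C^d$. Using the scaling Lemma \ref{lem:G-scaling} together with the identity $G_{\mathrm{id}, \mathbf e} = Z_{\mathbf e}$ from \eqref{eq:def-Z}, the relevant matrices are $G_{\Phi_{t_i}, \mathbf e^{(i)}} = t_i^2 Z_{\mathbf e^{(i)}} + (1 - t_i^2)\omega$.

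Next I would verify the orthogonality hypothesis of Proposition \ref{prop:orthogonal-incompatibility-criterion}. Subtracting $\omega$ gives $G_{\Phi_{t_i}, \mathbf e^{(i)}} - \omega = t_i^2 (Z_{\mathbf e^{(i)}} - \omega)$, so it suffices to show that $\langle Z_{\mathbf e^{(i)}} - \omega, Z_{\mathbf e^{(j)}} - \omega \rangle = 0$ for $i \neq j$. Expanding the scalar product and invoking Lemma \ref{lem:Z-and-omega}, the unbiased pair $\mathbf e^{(i)}, \mathbf e^{(j)}$ satisfies $\langle Z_{\mathbf e^{(i)}}, Z_{\mathbf e^{(j)}} \rangle = 1$, together with $\langle Z_{\mathbf e^{(i)}}, \omega \rangle = \langle \omega, Z_{\mathbf e^{(j)}} \rangle = 1$, while $\langle \omega, \omega \rangle = \Tr(\omega^2) = 1$ because $\omega$ is a rank-one projection. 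The four contributions cancel as $1 - 1 - 1 + 1 = 0$, which establishes the required orthogonality.

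With the hypothesis verified, Proposition \ref{prop:orthogonal-incompatibility-criterion} evaluates the SDP to $1 - N + \sum_{i=1}^N \Tr G_{\Phi_{t_i}, \mathbf e^{(i)}}$. I then compute $\Tr Z_{\mathbf e} = d$ (each summand $\ketbra{e_i \otimes \bar e_i}{e_i \otimes \bar e_i}$ is a rank-one projection) and $\Tr \omega = 1$, so that $\Tr G_{\Phi_{t_i}, \mathbf e^{(i)}} = 1 + (d-1) t_i^2$. Substituting, the SDP value equals $1 + (d-1)\sum_{i=1}^N t_i^2$. By Theorem \ref{thm:incompatibility-criterion}, incompatibility follows whenever this value exceeds $d$, which rearranges precisely to $\sum_{i=1}^N t_i^2 > 1$, as claimed.

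The only genuinely delicate point is the availability of $N$ mutually unbiased bases, which is exactly why the constraint $N \leq D_d$ appears in the statement; all remaining steps are elementary trace evaluations. I would emphasize that unbiasedness is what makes the cross terms $\langle Z_{\mathbf e^{(i)}}, Z_{\mathbf e^{(j)}}\rangle$ collapse to $1$ and hence vanish after the subtraction of $\omega$, decoupling the constraints and reducing the SDP to a single additive expression; without a full set of mutually unbiased bases the orthogonality of Proposition \ref{prop:orthogonal-incompatibility-criterion} would fail and one would be forced to solve the SDP numerically instead.
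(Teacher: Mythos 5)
Your proof is correct and follows essentially the same route as the paper: choose $N$ mutually unbiased bases, compute $G_{\Phi_{t_i}, \mathbf e^{(i)}} = t_i^2 Z_{\mathbf e^{(i)}} + (1-t_i^2)\omega$ via Lemma \ref{lem:G-scaling}, apply Proposition \ref{prop:orthogonal-incompatibility-criterion} to evaluate the SDP as $1 + (d-1)\sum_{i=1}^N t_i^2$, and compare with $d$. The only difference is that you explicitly verify the orthogonality hypothesis ($1-1-1+1=0$ using Lemma \ref{lem:Z-and-omega} and $\Tr \omega^2 = 1$), a check the paper leaves implicit --- a welcome addition rather than a deviation.
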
 
\begin{proof}
Since the number of channels we consider is smaller than $D_d$, we can choose $N$ mutually unbiased bases $\mathbf e^{(1)}, \ldots, \mathbf e^{(N)}$. The SDP \eqref{eq:SDP-incompatibility-criterion} reads
$$
\begin{aligned}
    \min \quad& \Tr H   \\
    s.t.\quad&H\geq t_i^2 Z_{\mathbf e^{(i)}} + (1-t_i^2)\omega \quad \forall i \in [N].
\end{aligned}
$$
Proposition \ref{prop:orthogonal-incompatibility-criterion} applies, so the value of the SDP above is 
$$\min \Tr H= 1 + (d-1) \sum_{i=1}^N t_i^2.$$
Hence, if the condition \eqref{eq:sum-t-squared} holds, by Theorem \ref{thm:incompatibility-criterion} the $N$ quantum depolarizing channels $\Phi_{t_1}, \ldots, \Phi_{t_N}$ are incompatible. 
\end{proof}

As mentioned in the introduction, the compatibility of depolarizing channels is equivalent to approximate quantum cloning: how much noise one needs to add to $N$ copies of the identity channel to render them compatible. We present in Figure \ref{fig:cloning} the relative performance of the criterion from Proposition \ref{prop:depolarizing-compatible} with the true values of the noise parameters for $1 \to 2$ asymmetric approximate quantum cloning from Eq.~\eqref{eq:asymmetric-cloning}.

\begin{figure}
    \centering
    \includegraphics[width=.6\textwidth]{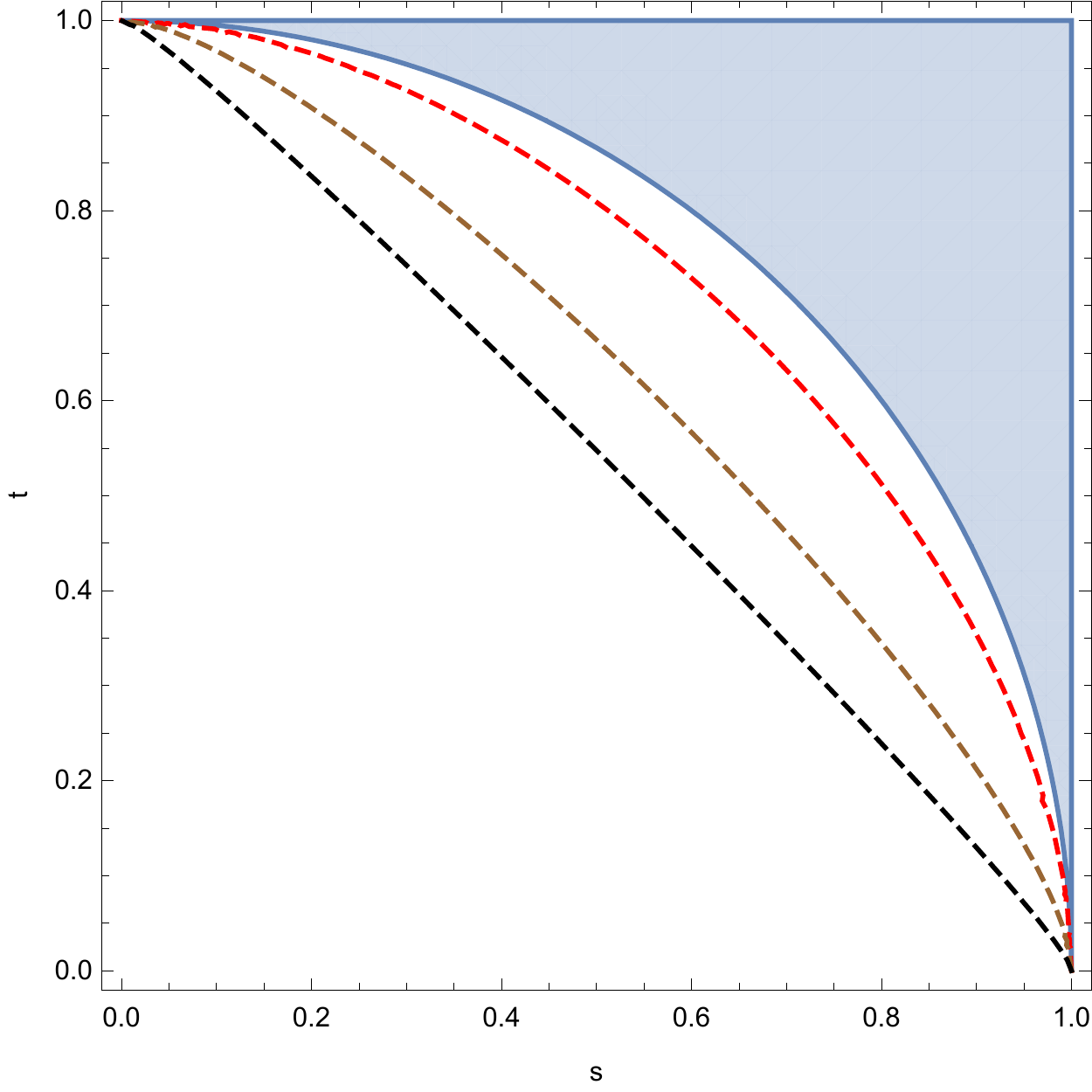}
    \caption{Comparing the incompatibility criterion from Proposition \ref{prop:depolarizing-compatible} (filled region) with the incompatibility thresholds from Eq.~\eqref{eq:asymmetric-cloning} (dashed curves) for different values of $d$: $d=2$ (red curve), $d=5$ (brown curve), $d=20$ (black curve).}
    \label{fig:cloning}
\end{figure}

We can specialize the result above to assemblages of depolarizing channels in the spirit of Definition \ref{def:assemblage-incompatibility}.
\begin{cor}
Consider $N$ partially depolarizing channels  \{$\Phi_{t_i}\}_{i=1}^N$ acting on $\mathcal L(H_d)$ and let $K \leq \min(N,D_d)$ be an integer. If there exist a subset $S \subseteq [N]$ of cardinality $K$ such that
$$\sum_{i\in S} t_i^2>1,$$
then the channels are $(N,K)$-incompatible. Moreover, if for \emph{every} subset $S \subseteq [N]$ of cardinality $K$ the condition above holds, the channels are $(N,K)$-strongly incompatible.
\end{cor}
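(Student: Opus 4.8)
The plan is to reduce both claims directly to Proposition \ref{prop:depolarizing-compatible}, applied separately to each relevant $K$-subset, and then invoke the assemblage definitions in Definition \ref{def:assemblage-incompatibility}. The essential observation is that a $K$-subset $S \subseteq [N]$ picks out exactly $K$ of the depolarizing channels, namely $\{\Phi_{t_i}\}_{i \in S}$; since the hypothesis guarantees $K \leq \min(N, D_d)$, and in particular $K \leq D_d$, Proposition \ref{prop:depolarizing-compatible} is legitimately applicable to this sub-tuple (there are enough mutually unbiased bases to run its argument).

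First I would handle the $(N,K)$-incompatibility statement. Assume there exists a $K$-subset $S$ with $\sum_{i \in S} t_i^2 > 1$. Applying Proposition \ref{prop:depolarizing-compatible} to the $K$ channels $\{\Phi_{t_i}\}_{i \in S}$ — valid because $K \leq D_d$ — the condition $\sum_{i \in S} t_i^2 > 1$ forces these $K$ channels to be incompatible. By Definition \ref{def:assemblage-incompatibility}, exhibiting at least one incompatible $K$-subset is exactly what it means for the assemblage $\mathbf \Phi$ to be $(N,K)$-incompatible, so the first claim follows.

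For the strong incompatibility, I would run the same argument uniformly. If $\sum_{i \in S} t_i^2 > 1$ holds for \emph{every} $K$-subset $S \subseteq [N]$, then the previous paragraph shows that each such subset $\{\Phi_{t_i}\}_{i \in S}$ is incompatible. Since all $K$-subsets are then incompatible, Definition \ref{def:assemblage-incompatibility} yields that $\mathbf \Phi$ is $(N,K)$-strongly incompatible, completing the second claim.

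There is essentially no hard step here: the corollary is a bookkeeping specialization of Proposition \ref{prop:depolarizing-compatible} to subsets, matched against the combinatorial definitions of assemblage incompatibility. The only point requiring care is ensuring the applicability of Proposition \ref{prop:depolarizing-compatible} on each subset, which is precisely why the hypothesis is stated as $K \leq \min(N, D_d)$ rather than merely $K \leq N$; this guarantees both that $K$-subsets exist and that a family of $K$ mutually unbiased bases is available for every subset considered.
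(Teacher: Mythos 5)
Your proposal is correct and matches the paper's intended argument exactly: the paper states this corollary without proof precisely because it is the routine specialization you describe, applying Proposition \ref{prop:depolarizing-compatible} to each $K$-subset (legitimate since $K \leq D_d$) and matching the outcomes against Definition \ref{def:assemblage-incompatibility}. Your remark on why the hypothesis is $K \leq \min(N, D_d)$ rather than $K \leq N$ correctly identifies the only point of care, which the paper itself flags in the paragraph following the corollary.
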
 

Note that in the statement above, we do not require that the number $N$ of channels must be smaller that the number of mutually unbiased bases in the corresponding Hilbert space; this is required only of the parameter $K$. This criterion might thus be useful in situations where one has a large number of channels. 

We end this section by a similar corollary, in the setting where the channels are identical. 
\begin{cor}
If $N,K$ are integers such that $K \leq \min(N, D_d)$, then the partially depolarizing channel $\Phi_t$ from Eq.~\eqref{eq:def-depolarizing-channel} is $(N,K)$-self-(strong) incompatible as soon as $t>1/\sqrt{K}$.
\end{cor}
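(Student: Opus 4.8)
The plan is to obtain this statement as a direct specialization of the immediately preceding corollary on $(N,K)$-(strong) incompatibility of depolarizing channels, now in the regime where all $N$ channels coincide. Setting $t_i = t$ for every $i \in [N]$, each $K$-subset $S \subseteq [N]$ carries identical parameters, so the governing quantity reduces to
$$\sum_{i \in S} t_i^2 = K t^2.$$
The hypothesis $t > 1/\sqrt{K}$ is exactly equivalent to $K t^2 > 1$, which is the incompatibility threshold appearing in that corollary. I would also flag at the outset that the constraint $K \leq \min(N, D_d)$ plays a dual role: the bound $K \leq N$ ensures $K$-subsets of $[N]$ exist, while $K \leq D_d$ guarantees the availability of $K$ mutually unbiased bases needed to invoke the Fisher information criterion (Theorem \ref{thm:incompatibility-criterion}) through Proposition \ref{prop:depolarizing-compatible}.

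Next I would exploit the key structural feature of the self-compatibility setting: since all channels are identical, \emph{every} $K$-subset of the assemblage consists of the very same $K$ copies of $\Phi_t$. Consequently, the condition $K t^2 > 1$ either holds for all $K$-subsets simultaneously or for none. Under $t > 1/\sqrt{K}$ it holds for all of them, and the strong-incompatibility branch of the preceding corollary immediately yields $(N,K)$-strong incompatibility. Because strong incompatibility (all $K$-subsets incompatible) trivially entails ordinary incompatibility (at least one $K$-subset incompatible), both claims encoded in the notation $(N,K)$-self-(strong) incompatible follow together.

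The main obstacle is essentially nonexistent: the entire content lies in the observation that identical parameters collapse the distinction between ``at least one'' and ``all'' $K$-subsets, so the generic multi-parameter result specializes cleanly without further work. The only point deserving a moment's care is confirming that the dimensional constraint $K \leq D_d$ is inherited correctly, so that the underlying mutually-unbiased-bases construction remains valid for the subset under consideration; this is automatic here since $K$ is fixed by hypothesis and the choice of bases is made subset-by-subset.
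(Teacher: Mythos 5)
Your proposal is correct and matches the paper's intended argument exactly: the corollary is stated without a separate proof precisely because it is the specialization $t_i = t$ of the preceding corollary, under which the condition $\sum_{i \in S} t_i^2 = K t^2 > 1$ becomes $t > 1/\sqrt{K}$ uniformly over all $K$-subsets, yielding the strong (and hence ordinary) version simultaneously. Your remarks on the dual role of $K \leq \min(N, D_d)$ are also consistent with how the constraint is used in Proposition \ref{prop:depolarizing-compatible}.
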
 

\bigskip

\noindent\textit{Acknowledgments.} The authors were supported by the ANR project \href{https://esquisses.math.cnrs.fr/}{ESQuisses}, grant number ANR-20-CE47-0014-01 and by the China Scholarship Council.

\bibliographystyle{alpha}
\bibliography{article}

\end{document}